\documentclass[review,3pt,onecolumn]{elsarticle}
\usepackage[UKenglish]{babel} %
\usepackage[T1]{fontenc}
\usepackage[utf8]{inputenc}
\usepackage{amsmath}
\usepackage{amsthm}
\usepackage{amssymb}
\usepackage{mathtools}
\usepackage{thmtools}
\usepackage{commath}
\usepackage{bbm} %
\usepackage{verbatim} %
\usepackage[dvipsnames]{xcolor} %
\usepackage{algorithmic}
\usepackage{algorithm}
\usepackage{cleveref}
\usepackage{comment}
\usepackage[caption=false,font=footnotesize]{subfig}
\usepackage{array}
\usepackage{algorithmic}
\usepackage{algorithm}
\usepackage{mathrsfs}
\usepackage[]{siunitx}
\usepackage{microtype}

%

%%%% Misc

\allowdisplaybreaks

\DeclareUnicodeCharacter{00A0}{ } % to prevent \u8 errors
\DeclarePairedDelimiterX\Set[2]{\lbrace}{\rbrace}%
 { #1 \,:\, #2 } % Set of type {A : B}

%%%% Probability
\newcommand{\diag}{\operatorname{diag}} % Diagonal matrix
 % Gaussian distribution
 % Expectation value
 % Covariance matrix

\newcommand{\E}{\mathbb{E}}
\newcommand{\C}{\mathbb{C}}
\newcommand{\V}{\mathbb{V}}
%\newcommand{\Pr}{\mathbb{P}}

%%%% Integration
%\newcommand{\dif}{\mathrm{d}}

% Gaussian process

% Trigonometrics

%%%% Optimisation

%\DeclareMathOperator*{\min}{min}
%\DeclareMathOperator*{\max}{max}

%%%% Number sets

\newcommand{\R}{\mathbb{R}} % Set of real numbers
 % Set of rational numbers
 % Set of natural numbers
 % Set of integers

%%%% Matrices and vectors

 % Vector or matrix
 % Vector or matrix, for Greek letters and other weird symbols
 % Transpose of a matrix
 % Matrix trace

%% Matrices, Latin alphabet

%% Vectors, Latin alphabet

% commath has a command with same name (multiple partial differential in fraction)
\makeatletter
	\@ifundefined{md}{%
		}
		{
			
		}
\makeatother

%% Matrices and vectors, Greek alphabet

\newcommand{\T}{\ensuremath{\mathsf{T}}}

\usepackage{tikz}
\usepackage{pgfplots}
\usepgfplotslibrary{external}
\tikzexternalize[prefix=cache/]
\usepgfplotslibrary{statistics}
\usetikzlibrary{decorations}
\usetikzlibrary{decorations.markings}
\usetikzlibrary{pgfplots.statistics}
\pgfplotsset{compat = 1.3}

\definecolor{myred}{rgb}{0.82, 0.1, 0.26} %
\definecolor{mygreen}{rgb}{0.0, 0.7, 0.0} %
\definecolor{myblue}{rgb}{0,0.4,1} %
\definecolor{myyellow}{rgb}{1,0.8,0.4} %
\definecolor{mybrown}{RGB}{169,94,43} %
\definecolor{mycyan}{RGB}{29,202,190} %
\definecolor{myviolet}{RGB}{223,133,223} %
\definecolor{mypink}{RGB}{255,51,153} %
\definecolor{mygreen2}{RGB}{153,230,93} %

\definecolor{AaltoYellow}{RGB}{255, 205, 0}
\definecolor{AaltoRed}{RGB}{239, 51, 64}
\definecolor{AaltoBlue}{RGB}{0, 94, 184}
\definecolor{AaltoPurple}{RGB}{117, 59, 189}
\definecolor{AaltoGreen}{RGB}{120, 190, 32}
\definecolor{AaltoPurple2}{RGB}{192, 26, 162}

\newtheorem{proposition}{Proposition}

\newtheorem{remark}{Remark}

\newif\ifdouble %
 \doubletrue

\begin{document}

\begin{frontmatter}

\title{Iterative Statistical Linear Regression for Gaussian Smoothing in Continuous-Time Non-linear Stochastic Dynamic Systems}

\author{Filip~Tronarp and Simo~Särkkä}

\address{Department of Electrical Engineering and Automation, Aalto University, Rakentajanaukio 2, 02150 Espoo, Finland}

\begin{abstract}
This paper considers approximate smoothing for discretely observed non-linear stochastic differential equations. The problem is tackled by developing methods for linearising stochastic differential equations with respect to an arbitrary Gaussian process. Two methods are developed based on 1) taking the limit of statistical linear regression of the discretised process and 2) minimising an upper bound to a cost functional. Their difference is manifested in the diffusion of the approximate processes. This in turn gives novel derivations of pre-existing Gaussian smoothers when Method 1 is used and a new class of Gaussian smoothers when Method 2 is used. Furthermore, based on the aforementioned development the iterative Gaussian smoothers in discrete-time are generalised to the continuous-time setting by iteratively re-linearising the stochastic differential equation with respect to the current Gaussian process approximation to the smoothed process. The method is verified in two challenging tracking problems, a reentry problem and a radar tracked coordinated turn model with state dependent diffusion. The results show that the method has competitive estimation accuracy with state-of-the-art smoothers.  
\end{abstract}

\begin{keyword}
stochastic differential equations, statistical linear regression, iterative methods, continuous-discrete Gaussian smoothing.   
\end{keyword}

\end{frontmatter}

\section{Introduction}
Inference in continuous-time stochastic dynamic systems is a frequently occurring topic in disciplines such as navigation, tracking, and time series modelling \cite{TittertonWeston2004,CrassidisJunkins2004,LindstromMadsenNielsen2015,Sarkka2013}. The system is typically described in terms of a latent Markov process $\{X(s)\}_{s\geq 0}$, governed by a stochastic differential equation (SDE) \cite{Oksendal2003}. Furthermore, the process $\{X(s)\}_{s\geq 0}$ is assumed to be measured at a set of time instants $\{t_k\}_{k=1}^K$ by a collection of random variables $\{Y(t_k)\}_{k=1}^K$, each having a conditional distribution with respect to the process outcome at the corresponding time stamp. For the special case of an affine, Gaussian system, the calculation of predictive, filtering, and smoothing distributions amount to manipulating the joint moments of the latent process and the measurement process. In the case of filtering this procedure is known as Kalman-Bucy filtering \cite{KalmanBucy1961}. It was subsequently shown that the smoothing moments can be expressed in terms of ordinary differential equations with the filter moment as inputs \cite{RauchTungStriebel1965,Striebel1965}. 

While the theory of filtering and smoothing in linear dynamic systems is mature, the case of non-linear systems is still an area of intense research and a common strategy is to find a suitable linearisation of the system which enables the aforementioned methods for affine systems. An early approach was to linearise the system around the mean trajectory using truncated Taylor series \cite{Jazwinski:1970}. Another, fairly recent, approach was to apply the Rauch-Tung-Striebel smoother \cite{RauchTungStriebel1965} to a discretisation of the process and applying the limit $\delta t \to 0$ \cite{Sarkka2010}. The aforementioned approaches belong to the class of Gaussian smoothers, which were further studied in \cite{SarkkaSarmavuori2013}, where the non-linear smoothing theory \cite{Striebel1965,Leondes1970} was used to derive the Type I smoother. The Type I smoother of \cite{SarkkaSarmavuori2013} was subsequently re-derived in \cite{Koyama2018} using the projection methods developed in \cite{Brigo1998,Brigo1999}.

Another line of research has been into variational Gaussian smoothers, which imposes a Gaussian process posterior by fixed-form variational Bayes \cite{Archambeau2007}. This class of smoothers operate by iteratively refining the approximate smoothing solutions by taking gradient steps associated with the evidence lower bound. However, the development in \cite{Archambeau2007} requires a non-singular, state-independent diffusion. The variational smoother of \cite{Archambeau2007} was extended in \cite{Ala-Luhtala2015} by allowing singular diffusions if the drift in the singular subspace is an affine function of the state. Nonetheless, both formulations of variational smoothers require a state independent diffusion \cite{Archambeau2007,Ala-Luhtala2015}. The requirement of state independent diffusion was later circumvented by relaxing the condition of the fixed-form posterior to only have Gaussian marginals in \cite{Sutter2016}. However, this adds the requirement of solving a stochastic partial differential equation to obtain an initial value to their optimal control formulation \cite{Sutter2016}. This can be computationally challenging even for moderately large state spaces, which is why the method was only validated on problems of small state space (one dimension). Another approach in a similar vein is the continuous-time expectation propagation algorithm \cite{Cseke2016}, which is also applicable when the diffusion is state dependent. However, accurate approximations of the moments with respect to the so-called \emph{tilted distributions}, which becomes a non-trivial problem as the dimension of the state-space grows. 

A recent advance, in discrete time inference, is the iterated posterior linearisation smoother (IPLS) \cite{Garcia2017} (see also \cite{Tronarp2018a}) which generalises the iterated extended Kalman smoother \cite{Bell1994} to sigma-point methods. This was done on the basis of statistical linear regression \cite{Lefebvre:2002}, where a given smoothing solution is improved upon by re-linearising the system using the current Gaussian smoother approximation and then running the smoother again \cite{Garcia2017,Tronarp2018a}. However, an analogue for continuous-time smoothing has yet to appear. 

The purpose of this paper is thus to generalise the discrete-time smoother of \cite{Garcia2017,Tronarp2018a} to the continuous-time case. This is accomplished by generalising the statistical linear regression method \cite{Lefebvre:2002} to the setting of stochastic differential equations. Two ways of doing this are discovered, 1)  
taking the limit of the statistical linear regression solution of a discretised process and 2) minimising an upper bound to a cost functional. This gives novel derivations of the smoothers in \cite{SarkkaSarmavuori2013} when Method 1 is used and a new kind of smoothers when Method 2 is used. Furthermore, using the aforementioned linearisation methods iterative Gaussian smoothers, analogous to the discrete-time iterative Gaussian smoother \cite{Bell1994,Garcia2017,Tronarp2018a}, are developed for stochastic differential equations.

In \Cref{sec:problem_formulation} the smoothing problem is formally posed, and linear smoothing theory and previous approaches to smoothing in non-linear systems is reviewed. Lastly, the present contribution is outlined. In \Cref{sec:slr_sde}, the statistical linear regression method is generalised to stochastic differential equations. It is derived both as a discrete time limit and as a minimiser to a certain cost functional. The development in \Cref{sec:slr_sde} is subsequently combined with a linear smoothing theory to arrive at novel derivations of the smoothers presented in \cite{Sarkka2010,SarkkaSarmavuori2013}. The main result is presented in \Cref{sec:iterative_smoothers} where the discrete time iterative Gaussian smoothers \cite{Garcia2017,Tronarp2018a,Bell1994} are generalised to continuous-time models. In \Cref{sec:experimental_results} the iterative smoothers are demonstrated in two non-linear and high-dimensional target tracking problems. The manuscript ends with the conclusion in \Cref{sec:conclusion}.

\section{Problem Formulation}\label{sec:problem_formulation}
The setting is as follows, there is a latent Markov process $\{X(t)\}_{t\geq 0}, \ X(t) \in \mathbb{R}^{d_X}$ which is assumed to evolve according to the following discretely observed stochastic differential equation (SDE) model
\begin{subequations}\label{eq:general_model}
\begin{align}
&\dif X(t) = \mu(t,X(t))\dif t + \sigma(t,X(t))\mathrm{d}W(t), \label{eq:dynamics} \\
&Y(t_k) = h(t_k,X(t_k)) + V(t_k),  \label{eq:measurements} \\
&\C[V(t_k),V(t_l)] =\delta_{k,l}R \nonumber , \ V(t_k) \sim \mathcal{N}(0,R), \nonumber
\end{align}
\end{subequations}
where $f_{X(0)}(x) = \mathcal{N}(x;\bar{x}(0^-),\Sigma(0^-) )$, $\mu \colon \R_+\times\R^{d_X} \to \R^{d_X}$ is a drift function, $\sigma \colon \R_+\times\R^{d_X} \to \R^{d_X\times d_W}$ is a diffusion matrix, $\{W(s)\}_{s\geq 0}$ is a $d_W$-dimensional standard Brownian motion, $h \colon \R_+\times\R^{d_X} \to \R^{d_Y}$ is a measurement function, and $\delta_{k,l}$ is Kronecker's delta function.  Furthermore, given a measurement series $\{y(t_k)\}_{k=1}^K, \ t_{k+1} > t_k$. The set of measurements up to just before time $t$ and the set of measurements up to precisely time $t$ are denoted by $\mathscr{Y}({t^-}) = \{y(t_k) \colon  t_k < t \}$ and $\mathscr{Y}(t) =  \{y(t_k) \colon  t_k \leq t \}$, respectively.

The inference problem for $X(\tau)$ is then in the Bayesian sense to find a family of conditional densities 
\begin{equation}\label{eq:generic_pomp_inference}
f_{X(\tau) \mid \mathscr{Y}({t_k})}(x) , \ k= 1,\dots,K .
\end{equation}
When $\tau < t_k$ the probability density function in \Cref{eq:generic_pomp_inference} is said to be a smoothing distribution, if, in particular, $\tau = t_k$ it is a filtering distribution and if $\tau > t_k$ then the density in \Cref{eq:generic_pomp_inference} is said to be a predictive distribution. Moreover, the expectation, cross-covariance, and covariance operators are denoted by $\E[\cdot], \ \C[\cdot,\cdot]$ and $\V[\cdot]$. We use the following notation:
\begin{subequations}
\begin{align}
\bar{x}(t) &= \E[X(t) \mid \mathscr{Y}(t)], \\
\bar{x}(t^-) &= \E[X(t) \mid \mathscr{Y}(t^-)], \\
\Sigma(t) &= \V[X(t) \mid \mathscr{Y}(t)], \\
\Sigma(t^-) &= \V[X(t) \mid \mathscr{Y}(t^-)],
\end{align}
\end{subequations}
and similarly for the smoothing moments based on the entire measurement series:
\begin{subequations}
\begin{align}
\hat{x}(t) &= \E[X(t) \mid \mathscr{Y}(t_K)], \\
\Omega(t) &= \V[X(t) \mid \mathscr{Y}(t_K)].
\end{align}
\end{subequations}

\subsection{Prior Work}
Smoothing in state space models has endured long and considerable efforts in the past 50 years \cite{RauchTungStriebel1965,Striebel1965,Leondes1970,Kallianpur1968}. First the linear smoothing theory will be reviewed and subsequently the more prominent approaches to approximate smoothers. 

\subsubsection{Linear smoothing theory}
The linear smoothing theory applies to systems of the following form:
\begin{subequations}\label{eq:affine_model}
\begin{align}
&\dif X(t) = ( A(t) X(t) + b(t) ) \dif t + \sigma(t) \dif W(t), \label{eq:affine_dynamics}  \\
&Y({t_k}) = C(t_k) X(t_k) + d({t_k}) + V(t_k), \\ 
&\C[V(t_k),V(t_l)] =\delta_{t_k,t_l}R, \ V(t_k) \sim \mathcal{N}(0,R), \nonumber
\end{align}
\end{subequations}
where $W(t)$ is a standard Wiener process. Since the collection $(X(t_{1:K}),Y(t_{1:K}))$ is jointly Gaussian the conditioning reduces to projections in a finite dimensional space. Furthermore, this can be implemented in a sequential manner where alternations between update and predictions are carried out \cite{KalmanBucy1961}.

Starting with the predictive distribution, $X(t_k^-) \sim \mathcal{N}(\bar{x}(t_k^-),\Sigma(t_k^-) )$, the parameters of the filtering distribution are then computed according to \cite{Sarkka2013}
\begin{subequations}\label{eq:kalman_update}
\begin{align}
S(t_k) &=  C(t_k) \Sigma(t_k^-) C(t_k)^\T +  R, \\
K(t_k) &= \Sigma(t_k^-)C(t_k)^\T S^{-1}(t_k),  \\
\hat{v}(t_k) &= y(t_k) - C(t_k) \bar{x}(t_k^-) - d(t_k), \\
\bar{x}(t_k) &=  \bar{x}(t_k^-) + K(t_k) \hat{v}(t_k), \\
\Sigma(t_k) &= \Sigma(t_k^-) - K(t_k) S(t_k) K(t_k)^\T .
\end{align}
\end{subequations}
The predictive distribution at $t_{k+1}^-$ is then given by solving \cite{Oksendal2003,KalmanBucy1961}
\begin{subequations}\label{eq:kalman_prediction}
\begin{align}
\frac{\dif \bar{x}(t)}{\dif t} &= A(t)\bar{x}(t) + b(t),  \\
\frac{\dif \Sigma(t)}{\dif t}  &= A(t)\Sigma(t) + \Sigma(t) A(t)^\T + Q(t),
\end{align}
\end{subequations}
on the interval $\tau \in [t_k,t_{k+1}]$ with initial conditions $(\bar{x}(t_k),\Sigma(t_k) )$, where $Q(t) = \sigma(t)\sigma(t)^\T$. The differential equations for the smoothing moments can then be expressed in terms of the filtering moments according to \cite{RauchTungStriebel1965,Striebel1965,Leondes1970} 
\begin{subequations}\label{eq:linear_smoothing_equation}
\begin{align}
\begin{split}
\frac{\dif \hat{x}(t)}{\dif t} =& A(t)\hat{x}(t) + b(t) \\
&+Q(t)\Sigma^{-1}(t)( \hat{x}(t) - \bar{x}(t)),
\end{split}\\
\begin{split}
\frac{\dif \Omega(t)}{\dif t}   =& [A(t) + Q(t)\Sigma^{-1}(t)]\Omega(t) \\
&+ \Omega(t)[A(t) + Q(t)\Sigma^{-1}(t)]^\T - Q(t). 
\end{split}
\end{align}
\end{subequations} 
\subsubsection{The Non-linear smoothing theory approach}
Now consider the smoothing problem for the non-linear model in \Cref{eq:general_model}. The expectations with respect to the smoothing distribution of some test function $\psi(X(t))$ was studied in \cite{Striebel1965,Leondes1970}, and the backwards differential equation 
\begin{equation*}\label{eq:smoothing_moment_equation}
 \frac{\dif }{\dif t} \E[\psi(X(t))\mid \mathscr{Y}(t_K)] = \E[\mathscr{K}_1 \psi(X(t))\mid \mathscr{Y}(t_K)]
\end{equation*}  
is obtained, where the operator $\mathscr{K}_1$ is given by 
\begin{equation*}
\begin{split}
&\mathscr{K}_1\psi(x(t) ) = \sum_p \mu_p(t,x(t)) \partial_{p}\psi(x(t)) \\
&- \sum_{p,r}\partial_{p} \psi(x(t)) \partial_{r}[Q(t,x(t))]_{p,r}   \\
&- \frac{1}{2}\sum_{p,r} [Q(t,x(t))]_{p,r}\partial^2_{p,r} \psi(x(t)) \\
&- \frac{\sum_{p,r} [Q(t,x(t))]_{p,r} \partial_{p}\psi(x(t))\partial_{r}\psi(x(t))}{f_{X(t)\mid \mathscr{Y}(t)}(x(t))} ,
\end{split}
\end{equation*}
where $\partial_{r}$ is the partial derivative operator with respect to the $r$:th coordinate in $X$ and $\partial_{p,r}^2$ is the composition of  $\partial_{r}$ and  $\partial_{p}$. Approaches to implement this was not covered in \cite{Striebel1965} and in \cite{Leondes1970} a Taylor expansion was used. However, by plugging in a Gaussian approximation to $f_{X(t)\mid \mathscr{Y}(t)}(x(t))$, the Type I smoother of \cite{SarkkaSarmavuori2013} is derived. The Type II and III smoothers are also discussed by \cite{SarkkaSarmavuori2013}, which originate from the smoother formulation developed in \cite{Sarkka2010}.  

\subsubsection{The Kullback-Leibler approach}
Another approach to smoothing in non-linear systems is based on minimising the Kullback-Leibler divergence between the posterior measure, $\mathbb{P}_{X\mid Y}$, and a fixed form Gaussian measure, $\mathbb{Q}_{X\mid Y}$ \cite{Archambeau2007}. This requires (i) $\sigma(t,X(t)) = \sigma(t)$ and (ii)  $Q(t)^{-1} = (\sigma(t)\sigma^\T(t))^{-1}$ exists. The Kullback-Leibler divergence is given by   
\begin{equation}\label{eq:kld_vb}
\begin{split}
\operatorname{KL}[\mathbb{Q}_{X\mid Y} \mid \mid \mathbb{P}_{X\mid Y} ] =& \frac{1}{2}\int_0^T E(t) \dif t + \frac{K d_X}{2}\\
&+ \frac{K}{2}\log \det R + \log Z,
\end{split}
\end{equation}
where $Z$ is a normalisation constant, $E(t) = E_X(t) + E_{Y\mid X}(t)$, and 
\begin{subequations}
\begin{align*}
\begin{split}
E&_X(t) \\
&= \E^\mathbb{Q}\Big[ \lvert \lvert \mu(t,X(t)) - A(t)X(t) - b(t) \rvert\rvert_{Q^{-1}(t)}^2\Big], 
\end{split}\\
\begin{split}
E&_{Y\mid X}(t)  \\
&=\sum_{k=1}^K \E^\mathbb{Q}\Big[ \lvert \lvert y(t_k) - h(t_k,X(t_k)) \rvert\rvert_{R^{-1}}^2\Big] \delta(t-t_k),
\end{split}
\end{align*}
\end{subequations}
where $\lvert \lvert \cdot \rvert\rvert_W$ is a weighted Euclidean norm with weighting matrix $W$. Now, it is clear that the above functional is not well defined if $Q(t)$ is singular. This was extended in \cite{Ala-Luhtala2015} to the case of singular $Q(t)$ under the assumption that the drift function of the singular sub-space is an affine function of the state. This was further extended to the case of state dependent $Q$ in \cite{Sutter2016}, where they minimise \Cref{eq:kld_vb} subject to $\mathbb{Q}_{X\mid Y}$ being a diffusion process with a prescribed marginal law (i.e Gaussian). However, this requires computing $w(x,0)$, with $w(x,t)$ being the solution to the Kolmogorov backward equation.  

Furthermore, the minimisation objective of continuous-time expectation propagation\cite{Cseke2016} is retrieved by exponentiating the argument to the expectations in the definition of $E_{Y\mid X}(t)$, along with some other minor modifications.

\subsection{The Contribution}
The aim of this paper is to develop iterative techniques for obtaining smoothing estimates of the system in \Cref{eq:dynamics,eq:measurements}. More specifically the following contributions are put forth:
\begin{enumerate}
\item The statistical linear regression~\cite{Lefebvre:2002,Garcia2015} method is generalised to the setting of stochastic differential equations. Two alternative versions of this procedure are provided, 1) is based on applying standard statistical linear regression to a discretised version of an auxiliary SDE and passing to the continuous limit, while 2) sets up a least squares problem for the difference between the auxiliary SDE and an affine approximation, which results in the minimisation of a quadratic functional, which is solved by methods in variational calculus~\cite{Weinstock1974}. \label{item:1}

\item It is shown that the Type II and III smoothers of \cite{SarkkaSarmavuori2013} can be derived by using Method 1 for linearising the stochastic differential equation together with the linear smoothing results \cite{RauchTungStriebel1965,Striebel1965}. Using Method 2 gives a new class of smoothers. These shall be separated by the suffixes \emph{of the first kind} and \emph{of the second kind} for Method 1 and Method 2 of linearising the stochastic differential equation, respectively. 

\item Iterated smoothers are developed on the basis of using Contribution \ref{item:1} above to re-linearise the SDE with respect to the current best Gaussian process approximation of the smoothed process, which gives a continuous-time generalisation of the iterative Gaussian smoothers \cite{Garcia2017,Tronarp2018a}. 
\end{enumerate}

\section{Statistical Linear Regression For Stochastic Differential Equations}\label{sec:slr_sde}
In this section, the statistical linear regression method \cite{Lefebvre:2002} (see also \cite{Garcia2015}) is generalised to the case of affine approximations of stochastic differential equations. Let $X(t)$ be driven by the SDE in \Cref{eq:dynamics}. Then a Gaussian process, $\{\widehat{X}(s)\}_{s\geq 0}$ can be used to approximate the evolution of $X(t)$ according to 
\begin{equation}\label{eq:sde_approximation}
\dif X(t) \approx [A(t)X(t) + b(t)] \dif t + \bar{\sigma}(t) \dif \widehat{W}(t),  
\end{equation}
where $\widehat{W}(t)$ is a standard Wiener process. The procedures for doing this is given in \Cref{alg:slrsde,alg:slrsde2}. The role of the Gaussian process, $\{\widehat{X}(s)\}_{s\geq 0}$, in \Cref{alg:slrsde,alg:slrsde2} is essentially, to approximate the drift function and the diffusion matrix according to \Cref{eq:sde_approximation} This is done by considering an auxiliary process defined by 
\begin{equation}\label{eq:auxiliary_process}
\dif \widetilde{X}(t) = \mu(t,\widehat{X}(t)) \dif t + \sigma(t,\widehat{X}(t)) \dif \widetilde{W}(t),
\end{equation}
where $\widetilde{W}(t)$ is a standard Brownian motion with that is independent of $\widehat{X}(t)$. The parameters $A(t)$, $b(t)$, $\bar{\sigma}(t)$ are then found by making an affine approximation of \Cref{eq:auxiliary_process} in terms of $\widehat{X}(t)$. This can be done in two ways, (i) performing SLR \cite{Lefebvre:2002} on an Euler-Maruyama discretisation of \Cref{eq:auxiliary_process} and passing to the limit, which gives \Cref{alg:slrsde}, and (ii) using \Cref{eq:auxiliary_process} to set up a cost functional for $A(t)$, $b(t)$, and $\bar{\sigma}(t)$, which results in \Cref{alg:slrsde2}. Detailed derivations of these approaches are developed in \Cref{subsec:slr_discrete_time_limit,subsec:slr_variational_calculus}.  

Furthermore, it should be noted that \Cref{alg:slrsde,alg:slrsde2} suggests an iterative scheme for smoothing, analogous to the discrete-time case \cite{Garcia2017,Tronarp2018a}. That is, the linearising process $\widehat{X}(t)$ is taken to be the current best approximation of the smoothing solution of \Cref{eq:general_model}. The parameters that are retrieved can then be used in conjunction with \Cref{eq:kalman_update,eq:kalman_prediction,eq:linear_smoothing_equation} to retrieve a new approximate smoothing solution. This issue that shall be revisited in \Cref{sec:iterative_smoothers}. 

\begin{algorithm}
\caption{Statistical Linear Regression I \newline(Discrete-time limit)}
\label{alg:slrsde}
\begin{algorithmic}
\REQUIRE The  marginal moment functions of a Gaussian process $\widehat{X}(s)$, $\{(\E[\widehat{X}(s)], \V[\widehat{X}(s)]) \}_{s=0}^T$, drift function $\mu(t,X(t))$, and diffusion matrix $\sigma(t,X(t))$.  
\ENSURE Approximate drift function, $A(t)X(t) + b(t)$, and a diffusion matrix $\bar{\sigma}_1(t)$. 
\STATE $A(t) \leftarrow \C[\mu(t,\widehat{X}(t)),\widehat{X}(t)]\V[\widehat{X}(t)]^{-1}$
\STATE $b(t) \leftarrow \E[\mu(t,\widehat{X}(t))] - A(t)\E[\widehat{X}(t)]$
\STATE $\bar{\sigma}_1(t) \leftarrow \E[\sigma(t,\widehat{X}(t))\sigma^\T(t,\widehat{X}(t))]^{1/2}$
\end{algorithmic}
\end{algorithm}

\begin{algorithm}
\caption{Statistical Linear Regression II \newline(functional minimisation)}
\label{alg:slrsde2}
\begin{algorithmic}
\REQUIRE The  marginal moment functions of a Gaussian process $\widehat{X}(s)$, $\{(\E[\widehat{X}(s)], \V[\widehat{X}(s)]) \}_{s=0}^T$, drift function $\mu(t,X(t))$, and diffusion matrix $\sigma(t,X(t))$.  
\ENSURE Approximate drift function, $A(t)X(t) + b(t)$, and a diffusion matrix $\bar{\sigma}_2(t)$. 
\STATE $A(t) \leftarrow \C[\mu(t,\widehat{X}(t)),\widehat{X}(t)]\V[\widehat{X}(t)]^{-1}$
\STATE $b(t) \leftarrow \E[\mu(t,\widehat{X}(t))] - A(t)\E[\widehat{X}(t)]$
\STATE $\bar{\sigma}_2(t) \leftarrow \E[\sigma(t,\widehat{X}(t))]$
\end{algorithmic}
\end{algorithm}
\begin{remark}\label{rem:short_term_slr}
If $X(\tau) \sim \mathcal{N}(\E[X(\tau)],\V[X(\tau)])$ then \Cref{alg:slrsde,alg:slrsde2} can be used to obtain a Gaussian approximation of $X(\tau+\delta)$ by approximating the drift function and diffusion matrix at time $\tau$ using the moments of $X(\tau)$. This is the usual procedure in continuous-time Gaussian filtering (cf. \cite{SarkkaSarmavuori2013}), where \Cref{alg:slrsde} has previously been used implicitly. 
\end{remark}
\subsection{Discrete Time Limit}\label{subsec:slr_discrete_time_limit}
Here, a short-term variant of the procedure in \Cref{alg:slrsde} is derived (see \Cref{rem:short_term_slr}). Using an Euler-Maruyama discretisation \cite{KloedenPlaten1999} of \Cref{eq:auxiliary_process} gives 
\begin{equation}
\widetilde{X}(t+\delta) = \widetilde{X}(t) +  \mu(t,\widehat{X}(t)) \delta + \sigma(t,\widehat{X}(t)) \delta \widetilde{W}(t),
\end{equation} 
where $\delta \widetilde{W}(t) = \widetilde{W}(t+\delta) - \widetilde{W}(t)$ is a standard Wiener increment of size $\delta$. The statistical linear regression approach then involves forming an affine approximation to $\widetilde{X}(t+\delta)$ according to 
\begin{equation}
\widetilde{X}_a(t+\delta) = \widetilde{X}_a(t) + [A(t)\widehat{X}(t) + b(t) ] \delta + \Xi(t,\delta),
\end{equation}
where $ \Xi(t,\delta)$ is a zero mean random variable with covariance matrix $\Gamma(t,\delta)$ accounting for the error, assumed to be Gaussian (see e.g \cite{Garcia2015}), and $\widetilde{X}_a(t) \triangleq \widetilde{X}(t)$. The parameters $A(t)$ and $b(t)$ are then found by minimising the mean squared error of the residual: 
\begin{subequations}
\begin{align*}
\E\Bigg[\norm{\widetilde{X}(t+\delta) - \widetilde{X}_a(t) - [A(t)\widehat{X}(t) + b(t) ] \delta }^2\Bigg].  
\end{align*}
\end{subequations} 
This is simply a quadratic optimisation problem and the parameters $A(t)$ and $b(t)$ are thus (c.f \cite{Lefebvre:2002,Garcia2015})
\begin{subequations}
\begin{align*}
A(t) &=\C[\mu(t,\widehat{X}(t)),\widehat{X}(t)]\V[\widehat{X}(t)]^{-1}, \\
b(t) &= \E[\mu(t,\widehat{X}(t))] - A(t)\E[\widehat{X}(t)]. \\
\end{align*}
\end{subequations}
Furthermore, the residual is given by
\begin{equation*}
\widetilde{R}(t,\delta) = \widetilde{X}(t+\delta) - \widetilde{X}_a(t) - [A(t)\widehat{X}(t) + b(t) ] \delta.
\end{equation*}
Straight-forward calculations gives the moments of $\widetilde{R}(t,\delta)$ as 
\begin{subequations}
\begin{align*}
\E[\widetilde{R}(t,\delta)] &= 0, \\
\V[\widetilde{R}(t,\delta)] &= \E[\sigma(t,\widehat{X}(t))\sigma^\T(t,\widehat{X}(t))]\delta + o(\delta),
\end{align*}
\end{subequations}
which are taken to be the moments of $\Xi(t,\delta)$. That is, 
\begin{equation}
\Gamma(t,\delta) = \E[\sigma(t,\widehat{X}(t))\sigma^\T(t,\widehat{X}(t))]\delta + o(\delta). 
\end{equation}
Now define $\bar{\sigma}_1(t) = \E[\sigma(t,\widehat{X}(t))\sigma^\T(t,\widehat{X}(t))]^{1/2}$. Then the increment of $\widetilde{X}_a(t)$ is approximately given by 
\begin{equation*}
\begin{split}
\widetilde{X}_a(t+\delta) - \widetilde{X}_a(t) &= \Big(A(t) \widetilde{X}_a(t) + b(t) \Big)\delta \\
&\quad+ \bar{\sigma}_1(t) \delta\widehat{W}(t)+ o(\delta), 
\end{split}
\end{equation*}
where $\delta\widehat{W}(t) = \widehat{W}(t+\delta) - \widehat{W}(t)$ is a standard Wiener increment of size $\delta$, independent of $\widetilde{X}_a(t)$ and $W(t)$, that matches the variance of the part of $\Xi(t,\delta)$ that does not vanish faster than $\delta$ as $\delta \to 0$. Now, passing to the limit, $\delta \to 0$, gives the following differential
\begin{equation*}
\dif \widetilde{X}_a(t) = \Big(A(t) \widetilde{X}_a(t) + b(t) \Big)\dif t + \bar{\sigma}_1(t) \dif\widehat{W}(t),
\end{equation*}
which makes the procedure in \Cref{alg:slrsde} apparent.

\subsection{A Variational Formulation}\label{subsec:slr_variational_calculus}
Another approach for arriving at \Cref{eq:sde_approximation} is by employing variational calculus \cite{Weinstock1974} as follows. Define an approximating process to \Cref{eq:auxiliary_process}, $\widetilde{X}_a(t)$, with $\widetilde{X}_a(t_b) = \widetilde{X}(t_b)$ and $t_b > 0$, given by 
\begin{equation}
\dif \widetilde{X}_a(t) = (A(t)\widehat{X}(t) + b(t)) \dif t + \bar{\sigma}_2(t) \dif \widetilde{W}(t). 
\end{equation}
For $t_e>t_b$, the mean square error between $\widetilde{X}_a(t_e)$ and $\widetilde{X}(t_e)$ is given by
\begin{equation*}
\begin{split}
\E[\lvert\lvert &\widetilde{X}_a(t_e) -  \widetilde{X}(t_e) \rvert\rvert^2] \\
=&  \E\Bigg[\Big\lvert\Big\lvert \int_{t_b}^{t_e} A(\tau)\widehat{X}(\tau) + b(\tau) - \mu(\tau,\widehat{X}(\tau)) \dif \tau  \Big\rvert\Big\rvert^2\Bigg] \\
&+ \E\Bigg[\Big\lvert\Big\lvert \int_{t_b}^{t_e} \bar{\sigma}_2(\tau) -  \sigma(\tau,\widehat{X}(\tau)) \dif \widetilde{W}(\tau)  \Big\rvert\Big\rvert^2\Bigg] 
\end{split}
\end{equation*}
where the independence between $\widehat{X}(t)$ and $\widetilde{W}(t)$ was used to eliminate the cross-term. Furthermore, employing Jensen's inequality and It\^o isometry gives
\begin{equation*}
\begin{split}
&\E[\lvert\lvert \widetilde{X}_a(t_e) -  \widetilde{X}(t_e) \rvert\rvert^2] \\
&\quad \leq \E\Bigg[ \int_{t_b}^{t_e} \big\lvert\big\lvert A(\tau)\widehat{X}(\tau) + b(\tau) - \mu(\tau,\widehat{X}(\tau))\big\rvert\big\rvert^2 \dif \tau  \Bigg] \\
&\qquad + \E\Bigg[ \int_{t_b}^{t_e} \lvert\lvert \bar{\sigma}_2(\tau) - \sigma(\tau,\widehat{X}(\tau)) \rvert\rvert_F^2 \dif \tau  \Bigg], 
\end{split}
\end{equation*}
where $\lvert\lvert \cdot \rvert\rvert_F$ is the Frobenius norm. Therefore, an appropriate cost functional for fitting $A$, $b$, and $\bar{\sigma}_2$ may be defined as 
\begin{equation}\label{eq:cost_functional}
\begin{split}
&\mathscr{J}(A,b,\bar{\sigma}_2) = \frac{1}{2}  \int_{t_b}^{t_e} \E\Big[\lvert\lvert \bar{\sigma}_2(\tau) - \sigma(\tau,\widehat{X}(\tau)) \rvert\rvert_F^2\Big] \dif \tau   \\
&\quad + \frac{1}{2}  \int_{t_b}^{t_e} \E\Big[\lvert\lvert A(\tau)\widehat{X}(\tau) + b(\tau) - \mu(\tau,\widehat{X}(\tau))\rvert\rvert^2\Big] \dif \tau   
\end{split}
\end{equation}
Perturbing $A$, $b$, and $\bar{\sigma}_2$ by arbitrary functions $\varepsilon_b$, $\varepsilon_A$, and $\varepsilon_{\bar{\sigma}_2}$, respectively gives
\begin{subequations}
\begin{align*}
\begin{split}
&\mathscr{J}(A+\varepsilon_A,b,\bar{\sigma}_2) - \mathscr{J}(A,b,\bar{\sigma}_2) = r_A(\varepsilon_A) \\
&+ \int_{t_b}^{t_e} \operatorname{tr}\{ \E[\widehat{X}(\tau)(\widehat{X}^\T(\tau)A^\T(\tau) + b^\T(\tau)) )\varepsilon_A(\tau) ]   \} \dif \tau\\
&- \int_{t_b}^{t_e} \operatorname{tr}\{ \E[\widehat{X}(\tau) \mu^\T(\tau,\widehat{X}(\tau)) \varepsilon_A(\tau) ]   \} \dif \tau
\end{split}\\
\begin{split}
&\mathscr{J}(A,b+\varepsilon_b,\bar{\sigma}_2) - \mathscr{J}(A,b,\bar{\sigma}_2) =  r_b(\varepsilon_b) \\
&\quad+ \int_{t_b}^{t_e} \E[A(\tau) \widehat{X}(\tau) + b(\tau) - \mu(\tau,\widehat{X}(\tau))]^\T \varepsilon_b(\tau) \dif \tau
\end{split}\\
\begin{split}
&\mathscr{J}(A,b,\bar{\sigma}_2+\varepsilon_{\bar{\sigma}_2}) - \mathscr{J}(A,b,\bar{\sigma}_2) =r_{\bar{\sigma}_2}(\varepsilon_{\bar{\sigma}_2})    \\
&\quad+ \int_{t_b}^{t_e} \operatorname{tr}\{ (\bar{\sigma}_2(\tau) - \E[\sigma(\tau,\widehat{X}(\tau))] \} \dif \tau,
\end{split}
\end{align*}
\end{subequations}
where $r_A$, $r_b$, and $r_{\bar{\sigma}_2}$ contain the higher order terms of $\varepsilon_A$, $\varepsilon_b$, and $\varepsilon_{\bar{\sigma}_2}$, respectively. The sufficient conditions for minima are given by \cite{Weinstock1974} :  
\begin{subequations}
\begin{align*}
0 &= \E[A(\tau)\widehat{X}(\tau) + b(\tau) - \mu(\tau,\widehat{X}(\tau))], \\
0 &= \E[\widehat{X}(\tau)(\widehat{X}^\T(\tau)A^\T(\tau) + b^\T(\tau)  -\mu^\T(\tau,\widehat{X}(\tau))  ) ], \\
0 &= \bar{\sigma}_2(\tau) - \E[\sigma(\tau,\widehat{X}(\tau))].
\end{align*}
\end{subequations}
Therefore, the minimisers are given by 
\begin{subequations}
\begin{align}
A(\tau) &= \C[\mu(\tau,\widehat{X}(\tau)),\widehat{X}(\tau)]\V[\widehat{X}(\tau)]^{-1}, \\ 
b(\tau) &= \E[\mu(\tau,\widehat{X}(\tau))] - A(\tau) \E[\widehat{X}(\tau)], \\
\bar{\sigma}_2(\tau) &= \E[\sigma(\tau,\widehat{X}(\tau))].
\end{align}
\end{subequations} 
Thus the procedure in \Cref{alg:slrsde2} is obtained. 

\begin{remark}
Note that the cost functional in \Cref{eq:cost_functional} is defined on any interval $[t_b,t_e]$ with $t_e > t_b$. Thus one can take $t_b = 0$ and $t_e = t_K$ to make \Cref{alg:slrsde2} globally defined. This is in contrast to \Cref{alg:slrsde} which is defined by stitching together local approximations. 
\end{remark}

\subsection{The difference between the discrete-time limit and the variational formulation}\label{subsec:dtl_vs_vf}
While the difference between the discrete-time limit approach (\Cref{alg:slrsde}) and the variational approach (\Cref{alg:slrsde2}) is small it is nonetheless interesting to highlight. The only difference being the diffusion matrices, $\bar{\sigma}_1(t)$ and $\bar{\sigma}_2(t)$ for \Cref{alg:slrsde} and \Cref{alg:slrsde2}, respectively. The following holds. 
\begin{proposition}\label{prop:K1vK2}
Assume the same Gaussian process, $\widehat{X}(t)$, is used to obtain $\bar{\sigma}_1(t)$ and $\bar{\sigma}_2(t)$ then the following inequality holds 
\begin{equation}
\operatorname{tr}\{ \bar{\sigma}_1(t)\bar{\sigma}_1^\T(t) \} \geq \operatorname{tr}\{ \bar{\sigma}_2(t)\bar{\sigma}_2^\T(t) \}.
\end{equation}
\end{proposition}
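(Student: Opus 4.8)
The plan is to reduce both sides of the inequality to Frobenius norms of a single random matrix and then recognise the result as nothing more than the nonnegativity of variance, i.e.\ a Jensen-type inequality. To set this up I would introduce the shorthand $\Sigma_t \triangleq \sigma(t,\widehat{X}(t))$, a random matrix whose law is induced by the Gaussian process $\widehat{X}(t)$. With this notation the two diffusion factors supplied by \Cref{alg:slrsde,alg:slrsde2} read $\bar{\sigma}_1(t) = \E[\Sigma_t\Sigma_t^\T]^{1/2}$ and $\bar{\sigma}_2(t) = \E[\Sigma_t]$.

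The first step is to rewrite each trace. Since $\bar{\sigma}_1(t)$ is the symmetric positive semidefinite square root of $\E[\Sigma_t\Sigma_t^\T]$, it is symmetric and satisfies $\bar{\sigma}_1(t)\bar{\sigma}_1^\T(t) = \bar{\sigma}_1(t)^2 = \E[\Sigma_t\Sigma_t^\T]$. Using linearity of the expectation and the trace together with $\operatorname{tr}\{\Sigma_t\Sigma_t^\T\} = \lvert\lvert \Sigma_t \rvert\rvert_F^2$, I would obtain $\operatorname{tr}\{\bar{\sigma}_1(t)\bar{\sigma}_1^\T(t)\} = \E[\lvert\lvert \Sigma_t \rvert\rvert_F^2]$. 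Analogously, $\operatorname{tr}\{\bar{\sigma}_2(t)\bar{\sigma}_2^\T(t)\} = \operatorname{tr}\{\E[\Sigma_t]\E[\Sigma_t]^\T\} = \lvert\lvert \E[\Sigma_t] \rvert\rvert_F^2$. The claimed inequality is therefore equivalent to
\begin{equation*}
\E\big[\lvert\lvert \Sigma_t \rvert\rvert_F^2\big] \geq \lvert\lvert \E[\Sigma_t] \rvert\rvert_F^2 .
\end{equation*}

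The second step is to dispatch this reduced inequality. The cleanest route is to expand entrywise: writing $\lvert\lvert \Sigma_t \rvert\rvert_F^2 = \sum_{p,r}(\Sigma_t)_{p,r}^2$ and taking expectations gives $\E[\lvert\lvert \Sigma_t \rvert\rvert_F^2] - \lvert\lvert \E[\Sigma_t] \rvert\rvert_F^2 = \sum_{p,r}\big(\E[(\Sigma_t)_{p,r}^2] - \E[(\Sigma_t)_{p,r}]^2\big) = \sum_{p,r}\V[(\Sigma_t)_{p,r}] \geq 0$, each summand being a scalar variance. Equivalently one may note $\E[\lvert\lvert \Sigma_t \rvert\rvert_F^2] - \lvert\lvert \E[\Sigma_t] \rvert\rvert_F^2 = \E[\lvert\lvert \Sigma_t - \E[\Sigma_t] \rvert\rvert_F^2] \geq 0$, which is Jensen's inequality for the convex map $M \mapsto \lvert\lvert M \rvert\rvert_F^2$.

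There is no serious obstacle here; the only point requiring a line of care is the treatment of the matrix square root in $\bar{\sigma}_1(t)$. I would justify the identity $\bar{\sigma}_1(t)\bar{\sigma}_1^\T(t) = \E[\Sigma_t\Sigma_t^\T]$ by appealing to the symmetric PSD square root, and remark that in any case the value of $\operatorname{tr}\{\bar{\sigma}_1(t)\bar{\sigma}_1^\T(t)\}$ depends only on the product $\bar{\sigma}_1(t)\bar{\sigma}_1^\T(t) = \E[\Sigma_t\Sigma_t^\T]$ and not on the particular factorisation chosen, so the argument is insensitive to which square root the implementation uses. Everything else is routine linearity.
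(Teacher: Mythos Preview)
Your argument is correct and coincides with the paper's own proof: both substitute the definitions of $\bar{\sigma}_1(t)$ and $\bar{\sigma}_2(t)$ to reduce the claim to $\operatorname{tr}\{\E[\sigma\sigma^\T]\}\geq\operatorname{tr}\{\E[\sigma]\E[\sigma]^\T\}$ and then invoke Jensen's inequality. Your version is simply more explicit, spelling out the Frobenius-norm reformulation, the entrywise variance computation, and the square-root point that the paper leaves implicit.
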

\begin{proof}
Plugging in the expressions from \Cref{alg:slrsde,alg:slrsde2} gives
\begin{equation}
\begin{split}
&\operatorname{tr}\{ \E[\sigma(t,\widehat{X}(t))\sigma^\T(t,\widehat{X}(t))] \} \\
&\quad\geq  \operatorname{tr}\{ \E[\sigma(t,\widehat{X}(t))]\E[\sigma(t,\widehat{X}(t))]^\T \},
\end{split}
\end{equation}
which is Jensen's inequality. 
\end{proof}

The result in \Cref{prop:K1vK2} essentially says that $\bar{\sigma}_2$ will be smaller than $\bar{\sigma}_1$, in Frobenius sense. Equality can be retrieved when $\sigma(t,X(t)) = \sigma(t)$ or when $\V[\widehat{X}(t)] \to 0$. Furthermore, it is clear that approximate implementations of \Cref{alg:slrsde,alg:slrsde2}, employing Taylor series expansions of the integrand up to first order around $\E[\widehat{X}(t)]$ also give $\bar{\sigma}_1 = \bar{\sigma}_2$.

\section{Continuous-Discrete Gaussian Smoothers}\label{sec:smoothers}
The linearisation technique presented in \Cref{sec:slr_sde} allows for the formulation of approximate smoothers to the system \Cref{eq:general_model} by simply plugging in $A(t)$, $b(t)$, and $\bar{Q}_i(t) = \bar{\sigma}_i(t)\bar{\sigma}_i^\T(t),\ i \in \{1,2\}$ into the linear smoothing equations in \Cref{eq:linear_smoothing_equation}. Furthermore, it has not been specified what Gaussian process, $\widehat{X}(t)$ is used to compute $A(t),\ b(t)$ and $\bar{\sigma}_i(t)$. 

In any case, the smoothers developed in this paper rely on a Gaussian approximation to the filtering distribution which can be constructed on the fly during filtering by using \Cref{alg:slrsde} or \Cref{alg:slrsde2} to compute the linearisation parameters (see \Cref{rem:short_term_slr}), which are then given by 
\begin{subequations}\label{eq:filter_linearisation}
\begin{align}
A(t) &= \C[\mu(t,X(t)),X(t)\mid \mathscr{Y}(t)]\Sigma^{-1}(t), \\
b(t) &= \E[\mu(t,X(t))\mid \mathscr{Y}(t)] - A(t)\bar{x}(t), \\
\bar{\sigma}_i(t) &=
\begin{cases}
\E[\sigma(t,X(t))\sigma^\T(t,X(t))\mid \mathscr{Y}(t)]^{1/2}, \ i = 1, \\
\E[\sigma(t,X(t))\mid \mathscr{Y}(t)], \ i = 2.
\end{cases}
\end{align}
\end{subequations}
Plugging \Cref{eq:filter_linearisation} into \Cref{eq:kalman_prediction} gives
\begin{subequations}\label{eq:nonlinear_gaussian_filter}
\begin{align}
\frac{\dif \bar{x}(t)}{\dif t} &= \E[\mu(t,X(t))\mid \mathscr{Y}(t)],  \\
\begin{split}
\frac{\dif \Sigma(t)}{\dif t}  &= \bar{Q}_i(t) + \C[\mu(t,X(t)),X(t)\mid \mathscr{Y}(t)]\\
&\quad+ \C[\mu(t,X(t)),X(t)\mid \mathscr{Y}(t)]^\T ,
\end{split}
\end{align}
\end{subequations}
which is the prediction equations for the Gaussian filter, the filter update is handled by conventional discrete-time methods, see \cite{Sarkka2013,Bell1993,Garcia2015,Tronarp2018a}. Note that \Cref{eq:nonlinear_gaussian_filter} gives rise to two kinds of Gaussian filters depending on whether \Cref{alg:slrsde} or \Cref{alg:slrsde2} are used. These shall be referred to as the Gaussian filter of the first kind and the Gaussian filter of the second kind for $i=1$ and $i = 2$, respectively, with $i=1$ giving the classical Gaussian smoother (c.f \cite{SarkkaSarmavuori2013}). This terminology will be used throughout the paper to distinguish between methods derived from \Cref{alg:slrsde,alg:slrsde2}, respectively.  

Since the filtering distribution and the smoothing distribution are equal at $t = t_K$, there are two notable options for deriving approximations to the smoothing distribution, namely (i) linearising with respect to the filtering distribution or (ii) linearising with respect to the smoothing distribution on the fly. In the latter case, the linearisation parameters are given by 
\begin{subequations}\label{eq:smoother_linearisation}
\begin{align}
A(t) &= \C[\mu(t,X(t)),X(t)\mid \mathscr{Y}(t_K)]\Omega^{-1}(t), \\
b(t) &= \E[\mu(t,X(t))\mid \mathscr{Y}(t_K)] - A(t)\hat{x}(t), \\
\bar{\sigma}_i(t) &=
\begin{cases}
\E[\sigma(t,X(t))\sigma^\T(t,X(t))\mid \mathscr{Y}(t_K)]^{1/2}, \ i = 1, \\
\E[\sigma(t,X(t))\mid \mathscr{Y}(t_K)], \ i = 2.
\end{cases}
\end{align}
\end{subequations}
Plugging \Cref{eq:smoother_linearisation} into the linear smoothing equations \Cref{eq:linear_smoothing_equation} gives
\begin{subequations}\label{eq:type1}
\begin{align}
\begin{split}
\frac{\dif \hat{x}(t) }{\dif t}  &= \E[\mu(t,X(t))\mid \mathscr{Y}(t_K)] \\
&+ \bar{Q}_i(t)\Sigma^{-1}(t)[\hat{x}(t) - \bar{x}(t)] ,
\end{split}\\
\begin{split}
\frac{\dif\Omega(t) }{\dif t}  &= \bar{Q}_i(t) \Sigma^{-1}(t)\Omega(t) + \Omega(t) \Sigma^{-1}(t)\bar{Q}_i(t) - \bar{Q}_i(t)\\
&+ \C[\mu(t,X(t)),X(t) \mid \mathscr{Y}(t_K)] \\
&+ \C[\mu(t,X(t)),X(t) \mid \mathscr{Y}(t_K)]^\T .
\end{split}
\end{align}
\end{subequations}
The smoother in \Cref{eq:type1} are referred to as Type I$^*$ of the first kind and  Type I$^*$ of the second kind for $i=1$ and $i=2$, respectively. This, because their similarity to the Type I smoother of \cite{SarkkaSarmavuori2013}, their connection is elaborated on in \Cref{prop:type1_vs_type1star}.
\begin{proposition}\label{prop:type1_vs_type1star}
Let $X(t)$ and $\{Y(t_k)\}_{k=1}^K$ be governed by the system in \Cref{eq:general_model} and assume $\sigma(t,X(t)) = \sigma(t)$. Then the smoothing equations for the Type I \cite{SarkkaSarmavuori2013} and Type I$^*$ smoothers of the first and second kind agree. 
\end{proposition}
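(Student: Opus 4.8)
The plan is to check that, when $\sigma(t,X(t))=\sigma(t)$, all three smoothers collapse to the same pair of ordinary differential equations for $\hat{x}(t)$ and $\Omega(t)$. I would split the argument into an easy part, showing that the two kinds of Type~I$^*$ coincide, and a longer part, identifying the common equations with those of the Type~I smoother of \cite{SarkkaSarmavuori2013}.

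First I would dispose of the distinction between the first and second kind. With a state-independent diffusion the expectations in \Cref{eq:smoother_linearisation} are trivial, so $\bar{\sigma}_1(t)=(\sigma(t)\sigma^\T(t))^{1/2}$ and $\bar{\sigma}_2(t)=\sigma(t)$, whence $\bar{Q}_1(t)=\bar{Q}_2(t)=\sigma(t)\sigma^\T(t)$, which I denote $Q(t)$; this is exactly the equality case noted after \Cref{prop:K1vK2}. Consequently the Type~I$^*$ equations \Cref{eq:type1} are identical for $i=1$ and $i=2$, reading
\begin{align*}
\frac{\dif\hat{x}}{\dif t} &= \E[\mu(t,X)\mid\mathscr{Y}(t_K)] + Q\Sigma^{-1}(\hat{x}-\bar{x}), \\
\frac{\dif\Omega}{\dif t} &= Q\Sigma^{-1}\Omega + \Omega\Sigma^{-1}Q - Q + \C[\mu,X\mid\mathscr{Y}(t_K)] + \C[\mu,X\mid\mathscr{Y}(t_K)]^\T .
\end{align*}
It then remains to show that the Type~I smoother produces these same equations.

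For the second part I would start from the backward moment equation $\frac{\dif}{\dif t}\E[\psi(X(t))\mid\mathscr{Y}(t_K)]=\E[\mathscr{K}_1\psi(X(t))\mid\mathscr{Y}(t_K)]$ and substitute the Gaussian filtering approximation $f_{X(t)\mid\mathscr{Y}(t)}=\mathcal{N}(\,\cdot\,;\bar{x}(t),\Sigma(t))$, whose score is $\nabla\log f_{X(t)\mid\mathscr{Y}(t)}(x)=-\Sigma^{-1}(t)(x-\bar{x}(t))$. Because $Q$ no longer depends on the state, the $\partial_r[Q]_{p,r}$ term in $\mathscr{K}_1$ drops out, and the term of $\mathscr{K}_1$ carrying the filtering density contributes, under this substitution, $\sum_{p,r}Q_{p,r}\,\partial_p\psi\,[\Sigma^{-1}(X-\bar{x})]_r$. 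Evaluating the operator on $\psi(x)=x$ gives the mean equation directly, since $\E[\mathscr{K}_1 X\mid\mathscr{Y}(t_K)]=\E[\mu\mid\mathscr{Y}(t_K)]+Q\Sigma^{-1}(\hat{x}-\bar{x})$, matching the display above.

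The main obstacle, and the only genuine computation, is the covariance. I would take $\psi(x)=x_i x_j$ and evaluate $\E[\mathscr{K}_1(x_i x_j)\mid\mathscr{Y}(t_K)]$ componentwise: the second-derivative term contributes $-Q_{ij}$, the drift term contributes $\E[\mu_i X_j+\mu_j X_i\mid\mathscr{Y}(t_K)]$, and the density-dependent term contributes $\E[X_j[Q\Sigma^{-1}(X-\bar{x})]_i+X_i[Q\Sigma^{-1}(X-\bar{x})]_j\mid\mathscr{Y}(t_K)]$. I would then subtract $\frac{\dif}{\dif t}(\hat{x}_i\hat{x}_j)$, using the mean equation, to pass from $\E[X_iX_j\mid\mathscr{Y}(t_K)]$ to $\Omega_{ij}$. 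The delicate bookkeeping lives in this last step: the cross terms $\hat{x}_j[Q\Sigma^{-1}(\hat{x}-\bar{x})]_i$ must cancel exactly, leaving $[Q\Sigma^{-1}\Omega]_{ij}$ and $[\Omega\Sigma^{-1}Q]_{ij}$, while the moment pieces assemble into $\C[\mu,X\mid\mathscr{Y}(t_K)]+\C[\mu,X\mid\mathscr{Y}(t_K)]^\T$. Collecting terms reproduces the $\Omega$-equation of the display, and all three smoothers then agree. An equivalent and perhaps cleaner route, which I would at least cross-check against, is to observe that feeding \Cref{eq:smoother_linearisation} (for which $A\hat{x}+b=\E[\mu\mid\mathscr{Y}(t_K)]$ and $A\Omega=\C[\mu,X\mid\mathscr{Y}(t_K)]$) into the linear smoothing equations \Cref{eq:linear_smoothing_equation} already yields \Cref{eq:type1}; the content of the proposition is then precisely the identity between this and the $\mathscr{K}_1$-based Type~I equations, which the moment computation establishes.
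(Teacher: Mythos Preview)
Your argument is correct and your first step, observing that state-independent diffusion forces $\bar{Q}_1(t)=\bar{Q}_2(t)=\sigma(t)\sigma^\T(t)$, is exactly what the paper does. For the second step, however, the paper takes a much shorter route than you propose: it simply notes that, once $\bar{Q}_i$ reduces to $Q(t)$, the Type~I$^*$ equations \Cref{eq:type1} coincide \emph{by inspection} with Equation~(27) of \cite{SarkkaSarmavuori2013}, which records the Type~I smoother in closed form. No re-derivation via the backward moment operator $\mathscr{K}_1$ is carried out.

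What your approach buys is self-containment: you would actually reproduce the Type~I equations from the nonlinear smoothing theory rather than import them from the reference, and your componentwise calculation with $\psi(x)=x$ and $\psi(x)=x_ix_j$ makes explicit why the Gaussian filtering approximation collapses the $\mathscr{K}_1$-based equations to the linear-smoother form. The trade-off is length and the delicate bookkeeping you flag in the covariance step. The paper's proof, by contrast, is a one-line pointer to a known formula; it is terse and leans entirely on the reader having \cite{SarkkaSarmavuori2013} to hand, but that is evidently the intended level of detail here.
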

\begin{proof}
First note that $\bar{Q}_1 = \bar{Q}_2$ since the diffusion is state independent. The statement then follows by direct comparison to~\cite[$\mathrm{Equation\ }(27)$]{SarkkaSarmavuori2013}.
\end{proof}
 
When the linearisation is done with respect to the filtering distribution the smoothing moments are retrieved by plugging \Cref{eq:filter_linearisation} into \Cref{eq:linear_smoothing_equation}
\begin{subequations}\label{eq:type2}
\begin{align}
\begin{split}
\frac{\dif \hat{x}(t) }{\dif t}   &=  \C[\mu(t,X(t)),X(t)\mid \mathscr{Y}(t)]\\
&\quad\times \Sigma^{-1}(t) (\hat{x}(t)  - \bar{x}(t) )\\
&\quad+ \bar{Q}_i(t)\Sigma^{-1}(t)(\hat{x}(t) - \bar{x}(t))  \\
&\quad+ \E[\mu(t,X(t))\mid \mathscr{Y}(t)],  
\end{split}\\
\begin{split}
\frac{\dif \Omega(t) }{\dif t}  &= \Big( \C[\mu(t,X(t)),X(t)\mid \mathscr{Y}(t)] + \bar{Q}_i(t)\Big)\\
&\quad\times\Sigma^{-1}(t) \Omega(t) \\
&\quad+ \Omega(t) \Sigma^{-1}(t) \\
&\quad\times \Big( \C[\mu(t,X(t)),X(t)\mid \mathscr{Y}(t)]^\T + \bar{Q}_i(t)\Big)  \\
&\quad- \bar{Q}_i(t),
\end{split}
\end{align}
\end{subequations}
which corresponds to the Type II smoother of \cite{SarkkaSarmavuori2013} when $i=1$ and when $i=2$ another smoother is obtained. These shall, again, be referred to as Type II of the first kind and Type II of the second kind for $i=1$ and $i=2$, respectively. Furthermore, by the same argument as in \cite{SarkkaSarmavuori2013}, the Type II formulation may be converted to a Type III formulation, where only forward-time ODEs need to be solved. This argument is not repeated here, but the result is simply (see \cite{SarkkaSarmavuori2013})
\begin{subequations}\label{eq:type3}
\begin{align}
\frac{\dif \bar{x}(t)}{\dif t}  &= A(t) \bar{x}(t) + b(t), \\
\begin{split}
\frac{\dif \Sigma(t)}{\dif t}  &= A(t) \Sigma(t) + \Sigma(t) A^\T(t) + \bar{Q}_i(t),
\end{split}\\
\frac{\dif H_{l}(t)}{\dif t}  &= H_{l}(t)  A^\T(t), \\
G_{l+1} &= H_{l}(t_{l+1}^-) \Sigma^{-1}(t_{l+1}^-), \\
\begin{split}
\hat{x}(t_l) &= \bar{x}(t_l)  + G_{l+1} \Big( \hat{x}(t_{l+1}) - \bar{x}(t_{l+1}^-) \Big),
\end{split} \\
\begin{split}
\Omega(t_l) &=   G_{l+1} \Big(\Omega(t_{l+1}) - \Sigma(t_{l+1}^-) \Big)G_{l+1}^\T \\
&\quad+ \Sigma(t_l) . 
\end{split}
\end{align}
\end{subequations}

\begin{remark}
The Type II and III smoothers of the first kind are precisely the Type II and Type III smoothers as described by \cite{SarkkaSarmavuori2013}.
\end{remark}

To conclude this section we note that \Cref{prop:K1vK2} indicates that the diffusion term in the smoothers of the first kind will be larger than the diffusion term for the smoothers of the second kind. This leads to the expectation that the smoothers of the first kind will report a larger uncertainty than those of the second kind.

\section{Continuous-Discrete Iterative Gaussian Smoothers}\label{sec:iterative_smoothers}
In this section, the linearisation techniques of \Cref{sec:slr_sde} are combined with the Type I$^*$/II/III smoothers (of the first and second kind) of \Cref{sec:smoothers} to develop iterative Gaussian smoothers. This is done in an analogous manner to the discrete-time iterative smoothers \cite{Garcia2017,Tronarp2018a,Bell1994}. The basic idea is that given a Gaussian process, $\widehat{X}(t)$, \Cref{alg:slrsde} or \Cref{alg:slrsde2} can readily be applied to the system in \Cref{eq:dynamics,eq:measurements} (using standard statistical linear regression for the measurement equation \cite{Garcia2015}), which yields an approximate affine system for which inference is straight-forward. An iterative scheme is then obtained by alternating between linearisation and Gaussian smoothing, where $\widehat{X}(t)$ is always chosen as the current best approximation to the smoothing process. This defines an iterative scheme reminiscent of the Gauss--Newton method \cite{Bell1994}. 

\subsection{Iterative Smoothers}
Let $\{\widehat{X}^{(j)}(s)\}_{s\geq 0}$ be a Gaussian process approximating the smoothed process at iteration $j$, with moment functions, $\hat{x}^{(j)}(t)$ and $\Omega^{(j)}(t)$. Moreover, for a test function, $\psi(X)$, denote the expectation of $\psi(\widehat{X}^{(j)}(t))$ by $\E^{(j)}[\psi(\widehat{X}(t))]$. The linearisation parameters, $A^{(j)}$, $b^{(j)}$, and $\bar{\sigma}_i^{(j)}(t)$ can then be obtained by either using \Cref{alg:slrsde} or \Cref{alg:slrsde2} and the linearisation of the measurement model is given by~\cite{Garcia2017,Tronarp2018a,Garcia2015}  
\begin{subequations}
\begin{align*}
C^{(j)}(t_k) &= \C^{(j)}[h(t_k,\widehat{X}(t_k)),\widehat{X}(t_k)](\Omega^{(j)}(t_k))^{-1}, \\
d^{(j)}(t_k) &= \E^{(j)}[h(t_k,\widehat{X}(t_k))] - C^{(j)}(t_k) \hat{x}^{(j)}(t_k),\\
\begin{split}
\Delta^{(j)}(t_k) &= \V^{(j)}[h(t_k,X(t))] + R, \\
&- C^{(j)}(t_k) \Omega^{(j)}(t_k) (C^{(j)}(t_k))^\T,
\end{split}
\end{align*}
\end{subequations}
where $\Delta^{(j)}(t_k)$ is the variance of the residual at iteration $j$. The approximate smoothed process at iteration $j+1$ is then obtained by considering the system:
\begin{subequations}
\begin{align}
\begin{split}
\dif X(t) &= \Big( A^{(j)}(t) X(t) + b^{(j)}(t)  \Big) \dif t \\
&+ \bar{\sigma}^{(j)}(t) \dif \widehat{W}^{(j)}(t),
\end{split} \\
Y({t_k}) &= C^{(j)}(t_k) X(t_k) + d^{(j)}(t_k) + \widehat{V}^{(j)}(t_k), \\ 
\begin{split}
&\C[\widehat{V}^{(j)}(t_k),\widehat{V}^{(j)}(t_l)] =\delta_{k,l}\Delta^{(j)}(t_k), \\
&\widehat{V}^{(j)}(t_k) \sim \mathcal{N}(0,\Delta^{(j)}(t_k)),
\end{split}
\end{align}
\end{subequations}
where $\widehat{W}^{(j)}(t)$ is a standard Wiener process. An approximation to the filtered process at iteration $j+1$, $\{\bar{X}^{(j+1)}(s)\}_{s \geq 0 }$, is then obtained by using the linear filter defined by \Cref{eq:kalman_prediction,eq:kalman_update}, after which any of the smoother formulations \Cref{eq:type1,eq:type2,eq:type3} may be used to obtain $\{\widehat{X}^{(j+1)}(s)\}_{s \geq 0 }$.

\begin{remark}
In practice, \Cref{alg:slrsde,alg:slrsde2} can not be implemented in closed form. Standard approaches to approximate expectations with respect to a Gaussian density is by first order Taylor series or sigma-points \cite{Sarkka2013}. If the first order Taylor series method is used together with \Cref{alg:slrsde} or \Cref{alg:slrsde2}  then a continuous-time iterated extended Kalman smoother is obtained. 
\end{remark}

\subsection{Fixed Point Characterisation}
A convergence analysis of the proposed iteration scheme is beyond the scope of this paper. However, for the smoothers of the first kind, one can discretise the system, apply the analysis of the discrete time case \cite{Garcia2017,Tronarp2018a}, and assume the limits $J\to \infty $ and $\delta t \to 0$ can be interchanged, in which case convergence is guaranteed if the iterations are initialised sufficiently close to a fix point. 

Another topic of investigation is the relationship between the different types of smoothers at the fixed point. More specifically, the relationship between the Type II and Type I$^*$ smoother is illuminated. The smoothing moments for the Type II smoother at iteration $j+1$ are given by  
\begin{subequations}
\begin{align*}
\begin{split}
&\frac{\dif \hat{x}^{(j+1)}(t) }{\dif t}   =  \E^{(j)}[\mu(t,X(t))\mid \mathscr{Y}(t)] \\
&\quad+\bar{Q}_i^{(j)}(t)[\Sigma^{(j+1)}(t)]^{-1}(\hat{x}^{(j+1)}(t) - \bar{x}^{(j+1)}(t))   \\
&\quad+\C^{(j)}[\mu(t,X(t)),X(t)](\Sigma^{(j+1)}(t))^{-1} \\
&\quad\times (\hat{x}^{(j+1)}(t) - \bar{x}^{(j+1)}(t)) 
\end{split}\\
\begin{split}
&\frac{\dif \Omega^{(j+1)}(t)}{\dif t}  = - \bar{Q}_i^{(j)}(t)\\
&\quad+\Big( \C^{(j)}[\mu(t,X(t)),X(t)] + \bar{Q}_i^{(j)}(t)\Big) \\
&\quad \times [\Sigma^{(j+1)}(t)]^{-1} \Omega^{(j+1)}(t) \\
&\quad+ \Omega^{(j+1)}(t) [\Sigma^{(j+1)}(t)]^{-1} \\
&\quad \times \Big( \C^{(j)}[\mu(t,X(t)),X(t)]^\T + \bar{Q}_i(t)\Big)  .
\end{split}
\end{align*}
\end{subequations}
\begin{proposition}
The Type I$^*$, Type II, and Type III smoothers of first and second kinds are equivalent at the fixed point, respectively. That is, they converge to the same point. 
\end{proposition}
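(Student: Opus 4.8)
The plan is to prove the chain Type I$^*$ $=$ Type II $=$ Type III at the fixed point in two stages, observing throughout that the argument is uniform in the kind $i\in\{1,2\}$: the diffusion $\bar Q_i$ is computed from the same linearising law and enters all three formulations through identical terms, so the first- and second-kind statements follow by the same manipulations. First I would dispose of Type II versus Type III. By the construction recalled when deriving \Cref{eq:type3}, the Type III recursion is merely the forward-only (Rauch--Tung--Striebel) rearrangement of the backward mean/covariance ODEs \Cref{eq:type2} for one and the same linearised affine model; this rearrangement is an algebraic identity valid at every iteration, not only at a fixed point. Hence it remains only to compare Type I$^*$ \Cref{eq:type1} with Type II \Cref{eq:type2}.

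For that comparison I would use the defining property of the fixed point. There $\widehat X^{(j+1)}=\widehat X^{(j)}$, i.e. $\hat x^{(j+1)}=\hat x^{(j)}=:\hat x$ and $\Omega^{(j+1)}=\Omega^{(j)}=:\Omega$, so the law used to linearise coincides with the law the smoother returns. Writing $A,b$ for the drift coefficients obtained from this common law (\Cref{eq:smoother_linearisation}), this yields the two identities $A\hat x+b=\E[\mu(t,\widehat X)]$ and $A\Omega=\C[\mu(t,\widehat X),\widehat X]$, with $\widehat X\sim\mathcal{N}(\hat x,\Omega)$. These are exactly the relations that \Cref{eq:type1} builds in by definition but that the generic linear-smoother form \Cref{eq:type2} leaves implicit.

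The substitution is then direct. The Type II mean ODE carries the drift through $A\hat x+b$, whereas \Cref{eq:type1} carries it through $\E[\mu(t,\widehat X)]$; their difference is $A\hat x+b-\E[\mu(t,\widehat X)]=A(\hat x^{(j+1)}-\hat x^{(j)})$, which vanishes at the fixed point. Likewise the Type II covariance ODE carries the block $A\Omega+\Omega A^\T$, whereas \Cref{eq:type1} carries $\C[\mu,\widehat X]+\C[\mu,\widehat X]^\T$; their difference is $A(\Omega^{(j+1)}-\Omega^{(j)})+[A(\Omega^{(j+1)}-\Omega^{(j)})]^\T$, again zero at the fixed point. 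All remaining terms, namely the symmetric contribution $\bar Q_i\Sigma^{-1}\Omega+\Omega\Sigma^{-1}\bar Q_i-\bar Q_i$ and the filtering inputs $\bar x,\Sigma$, are common to both, since they are generated by the single forward pass on the shared affine model. With the two ODE systems now literally identical and sharing the terminal data $\hat x(t_K)=\bar x(t_K)$, $\Omega(t_K)=\Sigma(t_K)$, uniqueness of solutions closes the argument.

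The step I expect to be the main obstacle is the covariance bookkeeping: one must verify that, after inserting $A\Omega=\C[\mu,\widehat X]$, the cross-covariance terms of \Cref{eq:type2} reassemble into precisely the symmetric pair of \Cref{eq:type1}, keeping track that it is the filtering covariance $\Sigma$ and not the smoothing covariance $\Omega$ that multiplies $\bar Q_i$, and confirming that the forward filtering pass feeding all three smoothers is genuinely shared, so that no discrepancy can enter through $\bar x$ or $\Sigma$. The mean equation and the Type II/III identity are comparatively routine; the whole equivalence rests on the two fixed-point consistency relations, so I would establish those first and reduce everything else to substitution.
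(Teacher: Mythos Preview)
Your proposal is correct and follows essentially the same approach as the paper: exploit that Type~III is an algebraic rearrangement of Type~II, then at the fixed point substitute the identities $A\hat{x}+b=\E[\mu(t,\widehat{X})]$ and $A\Omega=\C[\mu(t,\widehat{X}),\widehat{X}]$ into the linear-smoother form to recover the Type~I$^*$ equations. The paper's proof carries out exactly this substitution (writing the Type~II iterate explicitly, inserting the definitions of $A^{(j)},b^{(j)}$, and invoking $\hat{x}^{(j+1)}=\hat{x}^{(j)}$, $\Omega^{(j+1)}=\Omega^{(j)}$, $\bar{x}^{(j+1)}=\bar{x}^{(j)}$, $\Sigma^{(j+1)}=\Sigma^{(j)}$); your anticipated ``main obstacle'' in the covariance bookkeeping is handled precisely by the identity $A^{(j)}\Omega^{(j)}=\C^{(j)}[\mu,\widehat{X}]$ that you already isolated.
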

\begin{proof}
Assume $(\hat{x}^{(j)},\Omega^{(j)})$ is a fixed point of the iteration and iterate once again. That is, insert $A^{(j)}$, $b^{(j)}$ and $\bar{Q}_i^{(j)}$ into \Cref{eq:type2} to obtain
\begin{subequations}
\begin{align*}
\begin{split}
&\frac{\dif \hat{x}^{(j+1)}(t)}{\dif t}  = \E^{(j)}[\mu(t,X(t))] \\
&\quad+ \bar{Q}_i^{(j)}(t) [\Sigma^{(j+1)}(t)]^{-1}( \hat{x}^{(j+1)}(t) - \bar{x}^{(j+1)}(t))\\
&\quad+ A^{(j)}(t)( \hat{x}^{(j+1)}(t) - \hat{x}^{(j)}(t) - \bar{x}^{(j+1)}(t) + \bar{x}^{(j)}(t) ),
\end{split}\\
\begin{split}
&\frac{\dif \Omega^{(j+1)}(t)}{\dif t}  = A^{(j)}(t) \Omega^{(j+1)}(t) + \Omega^{(j+1)}(t)(A^{(j)}(t))^\T \\
&\quad+ \bar{Q}_i^{(j)}(t) [\Sigma^{(j+1)}(t)]^{-1} \Omega^{(j+1)}(t)  \\
&\quad+ \Omega^{(j+1)}(t)[\Sigma^{(j+1)}(t)]^{-1}\bar{Q}_i^{(j)}(t)   - \bar{Q}_i^{(j)}(t).
\end{split}
\end{align*}
\end{subequations}
Now, plugging in the definition of $A^{(j)}$ and using the fact that $\bar{x}^{(j+1)} = \bar{x}^{(j)}$, $\Sigma^{(j+1)} = \Sigma^{(j)}$, $\hat{x}^{(j+1)} = \hat{x}^{(j)}$, and $\Omega^{(j+1)} = \Omega^{(j)}$, since $(\hat{x}^{(j)},\Omega^{(j)})$ is a fixed point, gives the following 
\begin{subequations}
\begin{align*}
\begin{split}
&\frac{\dif \hat{x}^{(j)}(t)}{\dif t}  = \E^{(j)}[\mu(t,X(t))] \\
&+ \bar{Q}_i^{(j)}(t) [\Sigma^{(j)}(t)]^{-1}( \hat{x}^{(j)}(t) - \bar{x}^{(j)}(t)),
\end{split}\\
\begin{split}
&\frac{\dif \Omega^{(j)}(t)}{\dif t}  = - \bar{Q}_i^{(j)}(t) + \C^{(j)}[\mu(t,X(t)),X(t)] \\
&\quad+  \C^{(j)}[\mu(t,X(t)),X(t)]^\T \\
&\quad+ \bar{Q}_i^{(j)}(t) [\Sigma^{(j)}(t)]^{-1} \Omega^{(j)}(t)  \\
&\quad+ \Omega^{(j)}(t)[\Sigma^{(j)}(t)]^{-1}\bar{Q}_i^{(j)}(t),
\end{split}
\end{align*}
\end{subequations}
which is the differential equations satisfied by a Type I$^*$ smoother (see \Cref{eq:type1}). Since Type III is equivalent to Type II,  all the presented smoothers (of the same kind) satisfy the same differential equation at the fixed point. 
\end{proof}
\subsection{Computational Complexity and Storage Requirement}
It is important to consider the computational complexity and storage requirement of the different types of iterative smoothers. If the time interval, for purposes of numerical solving the ODEs, is sub-divided into $N$ time stamps and $K$ measurements are processed, then for the non-iterative smoothers it was found that Type III is superior to Type I and II in terms of storage requirement, while being comparable in the number of Gaussian integrals needed \cite{SarkkaSarmavuori2013}. 

However, for the iterative schemes the storage requirements for Type I$^*$ and Type II smoothers are doubled due to having to store the smoothing solution of the previous iteration. The change for Type III smoother is more dramatic since the linearisation requires the storage of the smoothing solution of the previous iteration at all of the $N$ time stamp. The computational requirements for the smoothers using $J$ iterations are given in \Cref{tab:complexity}.  
\begin{table}[t]
	\centering
	\caption{Computational requirements for the iterative smoothers (of any kind).}
	\label{tab:complexity}
	\begin{tabular}{|l c  c|}
	\hline Smoothers & Integrals & Storage      \\
	\hline Type I$^*$   & 10NKJ &  $2NK(d_X+d_X^2)$   \\
	\hline Type II  & 3NKJ  &  $2NK(2d_X+3d_X^2)$ \\
	\hline Type III  & 3NKJ  &  $2NK(2d_X+3d_X^2)$ \\ \hline 
	\end{tabular} 
\end{table}
Therefore there is no significant difference in computational requirements once iterations are introduced.

\section{Experimental Results}\label{sec:experimental_results}

\subsection{Reentry}
The proposed iterative Gaussian smoother is compared to the variational smoother of \cite{Ala-Luhtala2015} in a reentry tracking problem. The state, $U = [X, Y, \dot{X}, \dot{Y}, \Psi]^\T$, represents the position $(X,Y)$, velocity $(\dot{X},\dot{Y})$, and an aerodynamic parameter, $\Psi$ of a vehicle. The dynamic equation is given by 
\begin{equation*}
\begin{split}
\dif U(t) &= \begin{bmatrix} 0 & \mathrm{I}_2 & 0 \\ G(t,U(t))\mathrm{I}_2 & D(t,U(t))\mathrm{I}_2 & 0 \\ 0 & 0 & 0 \end{bmatrix}U(t)\dif t\\
&\quad+ \begin{bmatrix} 0& \mathrm{I}_3\end{bmatrix}^\T\sigma\dif W(t),
\end{split}
\end{equation*}
where $\mathrm{I}_p$ is a $p\times p$ identity matrix and the zero entries are zero matrices of appropriate sizes. The functions $G(t,u)$ and $D(t,u)$ are given by
\begin{subequations}
\begin{align*}
G(t,u) &= - \frac{Gm_0}{(x^2 + y^2)^{3/2}},\\
\begin{split}
D(t,u) &= - \beta_0 \exp \Big(\psi + \frac{R_0 - (x^2 + y^2)^{1/2}}{H_0} \Big)\\
&\quad\times (\dot{x}^2 + \dot{y}^2)^{1/2}.
\end{split}
\end{align*}
\end{subequations}
The parameters were set to  
\begin{equation*}
\sigma = \diag[ \sqrt{2.4064}\cdot 10^{-5/2},\ \sqrt{2.4064} \cdot 10^{-5/2},\  1\cdot 10^{-3}],
\end{equation*}
$\beta_0 = -0.59783$, $H_0 = 13.406$, $Gm_0 = 3.9860 \cdot 10^5$, and $R_0 = 6374$. The vehicle is measured once per second by a radar at position $[s_x,s_y]^\T$ according to 
\begin{equation*}
\begin{split}
Z(t_k) &= \begin{bmatrix}  [(X(t_k) - s_x)^2 + (Y(t_k) - s_y)^2 ]^{1/2} \\ \tan^{-1} \Big( \frac{Y(t_k) - s_y}{X(t_k) - s_x} \Big) \end{bmatrix} \\
&\quad + V(t_k), 
\end{split}
\end{equation*}
where $V(t_k)$ is a Gaussian white noise sequence with covariance matrix
\begin{equation*}
R = \diag[1\cdot 10^{-3},\ 1.7\cdot 10^{-3}]. 
\end{equation*}
The initial state, $X(0)$ is Gaussian distributed with moments 
\begin{subequations}
\begin{align*}
\bar{x}(0) &= \big[ 6500.4 ,\ 349.14 ,\ -1.8093 ,\ -6.7967 ,\ 0.6932 \big],\\
\Sigma(0) &= \begin{bmatrix} \mathrm{I}_4 \cdot 10^{-6} & 0 \\ 0 & 1 \end{bmatrix}.
\end{align*}
\end{subequations}
The system was simulated 100 times on the interval $t\in [0,200]$ using the Euler-Maruyama method with a step-size of $1/1000$. The Type III template (\Cref{eq:type3}) is used for the implementation of the proposed iterative Gaussian smoother,\footnote{Note that since the diffusion is state independent the iterative smoothers of the first and second kind are equivalent, see \Cref{prop:type1_vs_type1star}.} the initial linearisation being with respect to the filtering distributions, with up to 4 subsequent iterations. The ODEs are approximated by constant input between discretisation instants, that is zeroth order hold whereby the equivalent discrete time system is computed using the matrix fraction decomposition (see, e.g, \cite{Axelsson2015}). The performance is compared to the variational smoother of \cite{Ala-Luhtala2015} \footnote{Since the diffusion is constant, the smoothers of \cite{Ala-Luhtala2015} and \cite{Sutter2016} are equivalent.}, which uses the standard fourth order Runge-Kutta method for integration, the same expectation approximator, and iterates until the change in Kullback-Leibler divergence is less than $10^{-3}$, the adaptive step-size goes below the threshold $10^{-3}$, or 20 iterations have been performed. Both smoothers use a step-size of $\delta t = 1/100$ for time integration and the spherical-radial cubature rule \cite{Arasaratnam2009} to approximate expectations. 

Boxplots for the RMSE in position, velocity, and the aerodynamic parameter are shown in \Cref{fig:reentry_rmse_boxplot} for iterations 0 through 4 of the proposed smoother and the variational smoother at convergence. The proposed smoother converges after a few iterations have similar performance to the variational smoother. 

The $\chi^2$-statistic (also known as NEES \cite{BarShalom2004}), averaged over Monte Carlo trials, for the various smoothers is shown in \Cref{fig:reentry_chi2stats}. Clearly the initialisation is inconsistent. Interestingly, the iterations appear to fall below the lower confidence band, which may be indicative of an overestimated covariance. In contrast, the variational smoother tends to have a slightly larger $\chi^2$-statistic on average, while still mostly keeping itself within the confidence band. 

Lastly, averaged performance metrics are reported in \Cref{tab:reentry_summary}. Here it can be seen that the proposed smoother again converges rapidly, after 1-2 iterations. The variational smoother took an average of $6.5$ iterations to converge, though the convergence criterion was rather strict so it can not be excluded that it can use fewer iterations without making a significant sacrifice in performance.    

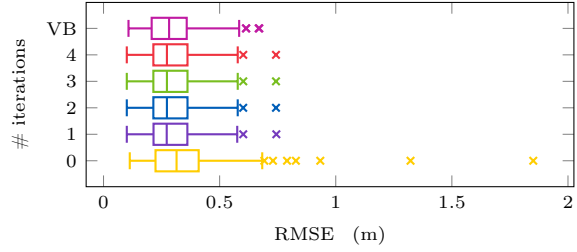
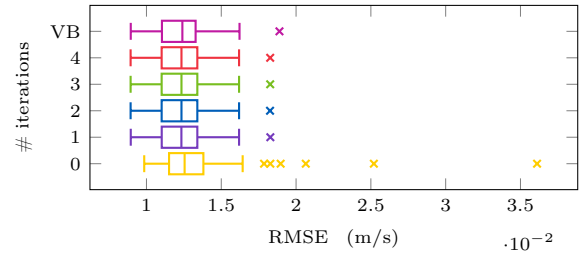
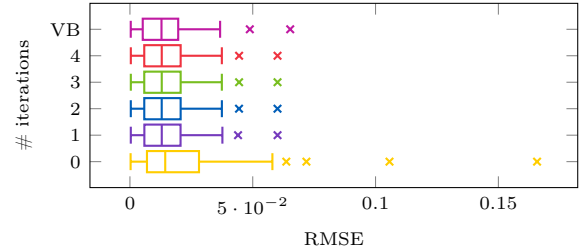
\begin{figure}[t!]
\centering
\subfloat[\scriptsize{Position}]{\begin{tikzpicture}
\begin{axis}[
       width=\columnwidth,
       height=0.5\columnwidth,
      ytick={1,2,3,4,5,6},
      yticklabels={$0$,$1$,$2$,$3$,$4$,$\text{VB}$},
%      ytick pos=left,
%      xtick pos=left,
       xlabel={RMSE \, $(\mathrm{m})$},
       ylabel={\# iterations},
      label style ={font=\scriptsize},
      ticklabel style = {font=\scriptsize},
%       xmax=0.2,
%       xmin=0,
%	   hide x axis,
%       xtick={0.8,0.85,0.9,0.95,1},
%      major tick length=\MajorTickLength,
]
\addplot[AaltoYellow,mark=x,thick,boxplot] 
      table[row sep=newline,y index=0]{../fig/REENTRY_BOXPLOT_RMSE_POS1.txt};

\addplot[AaltoPurple,mark=x,thick,boxplot] 
      table[row sep=newline,y index=0]{../fig/REENTRY_BOXPLOT_RMSE_POS2.txt};

\addplot[AaltoBlue,mark=x,thick,boxplot] 
     table[row sep=newline,y index=0]{../fig/REENTRY_BOXPLOT_RMSE_POS3.txt};

\addplot[AaltoGreen,mark=x,thick,boxplot] 
      table[row sep=newline,y index=0]{../fig/REENTRY_BOXPLOT_RMSE_POS4.txt};

\addplot[AaltoRed,mark=x,thick,boxplot] 
      table[row sep=newline,y index=0]{../fig/REENTRY_BOXPLOT_RMSE_POS5.txt};

\addplot[AaltoPurple2,mark=x,thick,boxplot] 
      table[row sep=newline,y index=0]{../fig/REENTRY_BOXPLOT_RMSE_POSVB.txt};

\end{axis}
\end{tikzpicture}}\\
\subfloat[\scriptsize{Velocity}]{\begin{tikzpicture}
\begin{axis}[
       width=\columnwidth,
       height=0.5\columnwidth,
      ytick={1,2,3,4,5,6},
      yticklabels={$0$,$1$,$2$,$3$,$4$,$\text{VB}$},
%      ytick pos=left,
%      xtick pos=left,
       xlabel={RMSE \, $(\mathrm{m/s})$},
       ylabel={\# iterations},
      label style ={font=\scriptsize},
      ticklabel style = {font=\scriptsize},
%       xmax=0.2,
%       xmin=0,
%	   hide x axis,
%       xtick={0.8,0.85,0.9,0.95,1},
%      major tick length=\MajorTickLength,
]
\addplot[AaltoYellow,mark=x,thick,boxplot] 
      table[row sep=newline,y index=0]{../fig/REENTRY_BOXPLOT_RMSE_VEL1.txt};

\addplot[AaltoPurple,mark=x,thick,boxplot] 
      table[row sep=newline,y index=0]{../fig/REENTRY_BOXPLOT_RMSE_VEL2.txt};

\addplot[AaltoBlue,mark=x,thick,boxplot] 
     table[row sep=newline,y index=0]{../fig/REENTRY_BOXPLOT_RMSE_VEL3.txt};

\addplot[AaltoGreen,mark=x,thick,boxplot] 
      table[row sep=newline,y index=0]{../fig/REENTRY_BOXPLOT_RMSE_VEL4.txt};

\addplot[AaltoRed,mark=x,thick,boxplot] 
      table[row sep=newline,y index=0]{../fig/REENTRY_BOXPLOT_RMSE_VEL5.txt};

\addplot[AaltoPurple2,mark=x,thick,boxplot] 
      table[row sep=newline,y index=0]{../fig/REENTRY_BOXPLOT_RMSE_VELVB.txt};

\end{axis}
\end{tikzpicture}}\\
\subfloat[\scriptsize{Aerodynamic parameter}]{\begin{tikzpicture}
\begin{axis}[
       width=\columnwidth,
       height=0.5\columnwidth,
      ytick={1,2,3,4,5,6},
      yticklabels={$0$,$1$,$2$,$3$,$4$,$\text{VB}$},
%      ytick pos=left,
%      xtick pos=left,
       xlabel={RMSE},
       ylabel={\# iterations},
      label style ={font=\scriptsize},
      ticklabel style = {font=\scriptsize},
%       xmax=0.2,
%       xmin=0,
%	   hide x axis,
%       xtick={0.8,0.85,0.9,0.95,1},
%      major tick length=\MajorTickLength,
]
\addplot[AaltoYellow,mark=x,thick,boxplot] 
      table[row sep=newline,y index=0]{../fig/REENTRY_BOXPLOT_RMSE_PAR1.txt};

\addplot[AaltoPurple,mark=x,thick,boxplot] 
      table[row sep=newline,y index=0]{../fig/REENTRY_BOXPLOT_RMSE_PAR2.txt};

\addplot[AaltoBlue,mark=x,thick,boxplot] 
     table[row sep=newline,y index=0]{../fig/REENTRY_BOXPLOT_RMSE_PAR3.txt};

\addplot[AaltoGreen,mark=x,thick,boxplot] 
      table[row sep=newline,y index=0]{../fig/REENTRY_BOXPLOT_RMSE_PAR4.txt};

\addplot[AaltoRed,mark=x,thick,boxplot] 
      table[row sep=newline,y index=0]{../fig/REENTRY_BOXPLOT_RMSE_PAR5.txt};

\addplot[AaltoPurple2,mark=x,thick,boxplot] 
      table[row sep=newline,y index=0]{../fig/REENTRY_BOXPLOT_RMSE_PARVB.txt};

\end{axis}
\end{tikzpicture}}
\caption{\scriptsize{Boxplots of the RMSE distributions over the 100 Monte Carlo trajectories for position (Top), velocity (middle) and, aerodynamic parameter (bottom), for the variational smoother at convergence (VB) and iterations 0 through 4 of the proposed iterative Gaussian smoother (0-4).}}\label{fig:reentry_rmse_boxplot}
\end{figure}

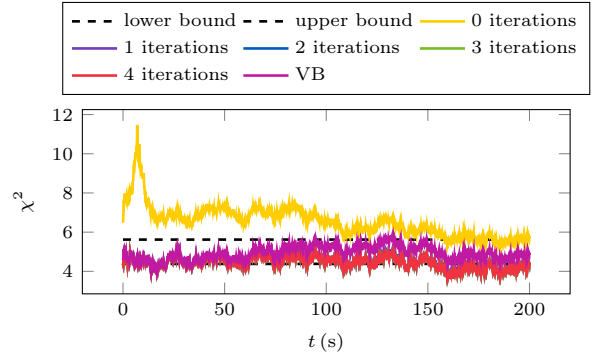
\begin{figure}[t]
\centering
\begin{tikzpicture}
    \begin{axis}[
       width=\columnwidth,
       height=0.5\columnwidth,
	    xlabel	= {$t \, (\mathrm{s})$},
            ylabel = {$\chi^2$},
      ticklabel style = {font=\scriptsize},
      label style ={font=\scriptsize},
%            y axis line style= { draw opacity=0 },
%            scaled ticks=false,
%            ticks = none,
%ylabel near ticks,
%ylabel shift = -0.5cm,
%	    yticklabel	= {EXACT},
%	    xticklabel	= {$t$},
%	   hide x axis,
% 	   hide y axis,
%	   xmin = -3,
%	   xmax = 50,
%	    ymin = 5,
%	    ymax = 20,
	legend cell align = left,
        legend style = {
            font = \scriptsize, 
	    legend columns = 3,
	   at={(axis cs:100,13)},
	    anchor=south
        },
    ]

    % plot lower CI bound
    \addplot[
        mark = none,
        draw = black,
        dashed,
        line width = 1pt,
    ]
    table[
	    x = t,
	    y = lb,
    ]{../fig/REENTRY_CHI2STATS.txt};
    \addlegendentry{lower bound}

    % plot upper CI bound
    \addplot[
        mark = none,
        draw = black,
        dashed,
        line width = 1pt,
    ]
    table[
	    x = t,
	    y = ub,
    ]{../fig/REENTRY_CHI2STATS.txt};
    \addlegendentry{upper bound}

    % plot 0 iterations
    \addplot[
        mark = none,
        draw = AaltoYellow,
        solid,
        line width = 1pt,
    ]
    table[
	    x = t,
	    y = 0,
    ]{../fig/REENTRY_CHI2STATS.txt};
    \addlegendentry{0 iterations}

    % plot 1 iterations
    \addplot[
        mark = none,
        draw = AaltoPurple,
        solid,
        line width = 1pt,
    ]
    table[
	    x = t,
	    y = 1,
    ]{../fig/REENTRY_CHI2STATS.txt};
    \addlegendentry{1 iterations}

    % plot 2 iterations
    \addplot[
        mark = none,
        draw = AaltoBlue,
        solid,
        line width = 1pt,
    ]
    table[
	    x = t,
	    y = 2,
    ]{../fig/REENTRY_CHI2STATS.txt};
    \addlegendentry{2 iterations}

    % plot 3 iterations
    \addplot[
        mark = none,
        draw = AaltoGreen,
        solid,
        line width = 1pt,
    ]
    table[
	    x = t,
	    y = 3,
    ]{../fig/REENTRY_CHI2STATS.txt};
    \addlegendentry{3 iterations}

    % plot 4 iterations
    \addplot[
        mark = none,
        draw = AaltoRed,
        solid,
        line width = 1pt,
    ]
    table[
	    x = t,
	    y = 4,
    ]{../fig/REENTRY_CHI2STATS.txt};
    \addlegendentry{4 iterations}

    % plot VB
    \addplot[
        mark = none,
        draw = AaltoPurple2,
        solid,
        line width = 1pt,
    ]
    table[
	    x = t,
	    y = VB,
    ]{../fig/REENTRY_CHI2STATS.txt};
    \addlegendentry{VB}

    \end{axis}
\end{tikzpicture}
\caption{\scriptsize{The $\chi^2$-statistic at each time point, averaged over the Monte Carlo trials with $95\%$ confidence bands for the variational smoother (VB) and iterations 0 through 4 of the proposed iterative Gaussian smoother (0-4).}}\label{fig:reentry_chi2stats}
\end{figure}

\begin{table}[t]
	\centering
	\caption{\scriptsize{The RMSE in position (POS) $(\mathrm{m})$, velocity (VEL) $(\mathrm{m/s})$, aerodynamic parameter ($\Psi$)  averaged over the Monte Carlo trials, and the average $\chi^2$-statistic, for the variational smoother at convergence (VB) and iterations 0 through 4 of the proposed iterative Gaussian smoother.}}
	\label{tab:reentry_summary}
	\begin{tabular}{|l |c c c c |}
	\hline Method & POS & VEL & $\Psi$ & $\chi^2$        \\
	\hline 0   & 0.3651  & 0.0132  & 0.0208   & 6.5556   \\ 
	\hline 1   & 0.2968  & 0.0123  & 0.0138   & 4.5173   \\ 
	\hline 2   & 0.2967  & 0.0123  & 0.0138   & 4.4565   \\ 
	\hline 3   & 0.2967  & 0.0123  & 0.0138   & 4.4565   \\ 
	\hline 4   & 0.2967  & 0.0123  & 0.0138   & 4.4565   \\ \hline
	\hline VB  & 0.2988  & 0.0124  & 0.0142   & 4.9332   \\ \hline
	\end{tabular} 
\end{table}

\subsection{Radar Tracked Coordinated Turn}
The proposed iterative smoothers are assessed in the radar tracked three dimensional coordinated turn model with state dependent diffusion (see \cite{SarkkaSarmavuori2013}). The latent process, $U = (X,Y,Z,\dot{X},\dot{Y},\dot{Z},\Psi)$, is given by
\begin{equation}\label{eq:radar_dynamics}
\begin{split}
&\mu^\T(u) = \big[ \dot{x},\  \dot{y},\ \dot{z},\ -\psi\dot{y},\ \psi\dot{x},\ 0,\ 0  \big],\\
&\mathrm{d}U(t) = \mu(U(t))\dif t + \sigma(U(t)) \dif W(t),
\end{split}
\end{equation}
where $(X,Y,Z)$ are the position coordinates, $(\dot{X},\dot{Y},\dot{Z})$ the corresponding velocities, $\Psi$ is the turn rate, $W(t)$ is a 4-dimensional Brownian motion, and 
\begin{equation*}
\begin{split}
\sigma(u) =& \begin{bmatrix}
0 & 0 & 0 & 0 \\
0 & 0 & 0 & 0 \\
0 & 0 & 0 & 0 \\
\frac{\dot{x}}{\xi(u)} & \frac{\dot{y}}{\eta(u)} & \frac{\dot{x}\dot{z}}{\xi(u)\eta(u)}  & 0 \\
\frac{\dot{y}}{\xi(u)} & -\frac{\dot{x}}{\eta(u)} & \frac{\dot{y}\dot{z}}{\xi(u)\eta(u)} & 0\\
\frac{\dot{z}}{\xi(u)} & 0 & -\frac{ \eta(u)}{\xi(u)} & 0 \\
0 & 0 & 0 & 1
\end{bmatrix}\\
&\quad\times \begin{bmatrix} \sigma_\parallel & 0 & 0 & 0 \\ 0 & \sigma_h & 0 & 0 \\ 0 & 0 & \sigma_v & 0 \\ 0 &0&0&\sigma_\Psi \end{bmatrix},
\end{split}
\end{equation*}
where
\begin{equation*}
\xi(u) = \sqrt{x^2+y^2+z^2}, \ \eta(u) = \sqrt{\dot{x}^2 + \dot{y}^2}.
\end{equation*}
The system is measured according to 
\begin{equation}\label{eq:radar_measurement}
\begin{split}
\begin{bmatrix} \Lambda(t_k) \\ \Theta(t_k) \\ \Phi(t_k) \end{bmatrix} =& \begin{bmatrix} \sqrt{X^2(t_k) + Y^2(t_k) + Z^2(t_k)} \\
 \tan^{-1} [Y(t_k)/X(t_k)]\\
  \tan^{-1} \frac{Z(t_k)}{\sqrt{X^2(t_k)+Y^2(t_k)}}   \end{bmatrix}
 + V(t_k),\\
&V(t_k) \sim \mathcal{N}\Big(0,\diag \begin{bmatrix} \sigma_\Lambda^2 & \sigma_\Theta^2 & \sigma_\Phi^2 \end{bmatrix}\Big). 
\end{split}
\end{equation}
The parameters were set as follows, $\sigma_\parallel = \sqrt{100}$, $\sigma_h = \sqrt{0.2}$,  $\sigma_v = \sqrt{0.2}$, $\sigma_\Psi = 7\cdot 10^{-3}\mathrm{rad}/\mathrm{s}$ , $\sigma_\Lambda = 50\mathrm{m}$ , $\sigma_\Theta = \sigma_\Phi = 0.1\pi/180 \mathrm{rad}$. The statistics of the initial state was set to $U(0) \sim \mathcal{N}(\bar{x}(0^-),\Sigma(0^-))$, where 
\begin{subequations}
\begin{align*}
&\bar{x}(0^-) = \begin{bmatrix} 1000 & 0 & 2650 & 200 & 0 & 150 & \frac{6\pi}{180} \end{bmatrix}^\T, \\
&\Sigma(0^-) = 100^2\diag\Big[ 1,\ 1,\ 1,\  1,\ 1,\ 1,\ \frac{\pi}{180\cdot 100^2}\Big].
\end{align*}
\end{subequations}
It should be noted that the diffusion term in \Cref{eq:radar_dynamics} is both singular and state-dependent, hence there exists no Gaussian process with respect to which the probability law of $U(t)$ is absolutely continuous and consequently none of variational smoothers by~\cite{Archambeau2007,Ala-Luhtala2015} are applicable. Moreover, as mentioned, the method of \cite{Sutter2016} requires solving a 7-dimensional stochastic partial differential equation, which is computationally unattractive. Similarly, for expectation propagation \cite{Cseke2016} it is not clear how to form good approximations of the expectation with respect to the tilted distribution with a state dimension of 7 and likelihoods in \Cref{eq:radar_measurement}. However, this provides a good opportunity to compare the iterative smoothers of the first and second kind (K1 and K2, respectively.). 

The Euler-Maruyama method was used to generate 100 independent realisations of the system using a step-size of $5/1000$, with the time between measurements set to $\Delta T = 6$, with $26$ measurement instants in total, starting from $t = 0$. Both smoothers were implemented in the same manner as the previous experiment, using a step-size of $\delta t = 5/100$.   

Boxplots of the root-mean-square error (RMSE) distribution over the 100 Monte-Carlo trajectories is provided in \Cref{fig:coordinated_turn_rmse_boxplot} for position, velocity, and turn-rate, respectively. It is clear that iterations can offer a substantial improvement in accuracy. The consistency of the iterative smoothers is assessed using the $ \chi^2$-statistic, which is averaged over the Monte Carlo trajectories and the resulting time series for iterations 0 through 4 is provided in \Cref{fig:coordinated_turn_chi2stats}. It can be seen that the initialisation of the smoothers is grossly inconsistent and the first iteration provides a massive improvement, while the subsequent iterations provide smaller improvements. Furthermore, the average RMSE and the average $\chi^2$ statistics for the different iterations is shown in \Cref{tab:radar_tracking_summary}.

The impression is that the smoother converges rather quickly, after two to three iterations in this scenario. Also the iterative smoothers K1 and K2 appear to perform similarly on this problem, with no discernible difference on average for up to 4 significant digits at convergence. However, the iterative smoother of the second kind perform notably worse at initialisation, particularly in terms of consistency, see  \Cref{tab:radar_tracking_summary}. Drawing from the discussion in \Cref{subsec:dtl_vs_vf}, we know that both smoothers will be approximately equivalent when the variance in the linearising process is small. As the system noises are fairly small in this experiment this eventually becomes the case as the iterations proceed.   

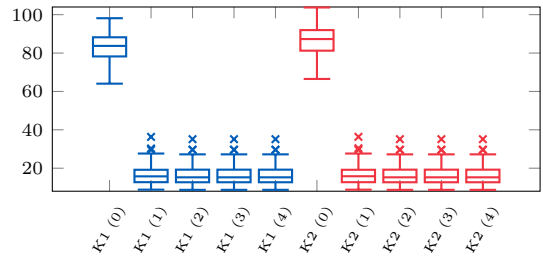
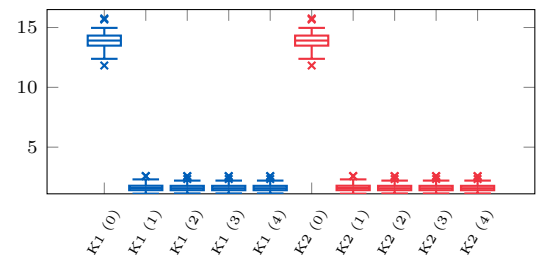
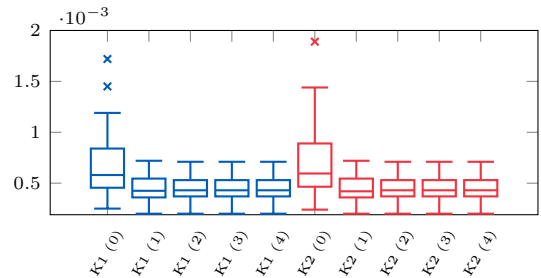
\begin{figure}[t!]
\centering
\subfloat[\scriptsize{Position}]{\begin{tikzpicture}
\begin{axis}[
       boxplot/draw direction=y,
       width=\columnwidth,
       height=0.5\columnwidth,
      xtick={1,2,3,4,5,6,7,8,9,10},
      xticklabels={K1 (0),K1 (1),K1 (2),K1 (3),K1 (4),K2 (0),K2 (1),K2 (2),K2 (3),K2 (4)},
      ticklabel style = {font=\scriptsize},
     xticklabel style = {rotate=60,font=\tiny},
      label style ={font=\scriptsize},
%      ytick pos=left,
%      xtick pos=left,
       %xlabel={RMSE \, $(\mathrm{rad})$},
       %ylabel={\# iterations},
       ymax=104,
       ymin=8,
%	   hide x axis,
%       xtick={0.8,0.85,0.9,0.95,1},
%      major tick length=\MajorTickLength,
]
\addplot[AaltoBlue,mark=x,thick,boxplot] 
      table[row sep=newline,y index=0]{../fig/COORDINATED_TURN_BOXPLOT_RMSE_POS1_K1.txt};

\addplot[AaltoBlue,mark=x,thick,boxplot] 
      table[row sep=newline,y index=0]{../fig/COORDINATED_TURN_BOXPLOT_RMSE_POS2_K1.txt};

\addplot[AaltoBlue,mark=x,thick,boxplot] 
     table[row sep=newline,y index=0]{../fig/COORDINATED_TURN_BOXPLOT_RMSE_POS3_K1.txt};

\addplot[AaltoBlue,mark=x,thick,boxplot] 
      table[row sep=newline,y index=0]{../fig/COORDINATED_TURN_BOXPLOT_RMSE_POS4_K1.txt};

\addplot[AaltoBlue,mark=x,thick,boxplot] 
      table[row sep=newline,y index=0]{../fig/COORDINATED_TURN_BOXPLOT_RMSE_POS5_K1.txt};

\addplot[AaltoRed,mark=x,thick,boxplot] 
      table[row sep=newline,y index=0]{../fig/COORDINATED_TURN_BOXPLOT_RMSE_POS1_K2.txt};

\addplot[AaltoRed,mark=x,thick,boxplot] 
      table[row sep=newline,y index=0]{../fig/COORDINATED_TURN_BOXPLOT_RMSE_POS2_K2.txt};

\addplot[AaltoRed,mark=x,thick,boxplot] 
     table[row sep=newline,y index=0]{../fig/COORDINATED_TURN_BOXPLOT_RMSE_POS3_K2.txt};

\addplot[AaltoRed,mark=x,thick,boxplot] 
      table[row sep=newline,y index=0]{../fig/COORDINATED_TURN_BOXPLOT_RMSE_POS4_K2.txt};

\addplot[AaltoRed,mark=x,thick,boxplot] 
      table[row sep=newline,y index=0]{../fig/COORDINATED_TURN_BOXPLOT_RMSE_POS5_K2.txt};

\end{axis}
\end{tikzpicture}}\\
\subfloat[\scriptsize{Velocity}]{\begin{tikzpicture}
\begin{axis}[
       boxplot/draw direction=y,
       width=\columnwidth,
       height=0.5\columnwidth,
      xtick={1,2,3,4,5,6,7,8,9,10},
      xticklabels={K1 (0),K1 (1),K1 (2),K1 (3),K1 (4),K2 (0),K2 (1),K2 (2),K2 (3),K2 (4)},
      ticklabel style = {font=\scriptsize},
     xticklabel style = {rotate=60,font=\tiny},
      label style ={font=\scriptsize},
%      ytick pos=left,
%      xtick pos=left,
       %xlabel={RMSE \, $(\mathrm{rad})$},
       %ylabel={\# iterations},
       ymax=16.5,
       ymin=1.1,
%	   hide x axis,
%       xtick={0.8,0.85,0.9,0.95,1},
%      major tick length=\MajorTickLength,
]
\addplot[AaltoBlue,mark=x,thick,boxplot] 
      table[row sep=newline,y index=0]{../fig/COORDINATED_TURN_BOXPLOT_RMSE_VEL1_K1.txt};

\addplot[AaltoBlue,mark=x,thick,boxplot] 
      table[row sep=newline,y index=0]{../fig/COORDINATED_TURN_BOXPLOT_RMSE_VEL2_K1.txt};

\addplot[AaltoBlue,mark=x,thick,boxplot] 
     table[row sep=newline,y index=0]{../fig/COORDINATED_TURN_BOXPLOT_RMSE_VEL3_K1.txt};

\addplot[AaltoBlue,mark=x,thick,boxplot] 
      table[row sep=newline,y index=0]{../fig/COORDINATED_TURN_BOXPLOT_RMSE_VEL4_K1.txt};

\addplot[AaltoBlue,mark=x,thick,boxplot] 
      table[row sep=newline,y index=0]{../fig/COORDINATED_TURN_BOXPLOT_RMSE_VEL5_K1.txt};

\addplot[AaltoRed,mark=x,thick,boxplot] 
      table[row sep=newline,y index=0]{../fig/COORDINATED_TURN_BOXPLOT_RMSE_VEL1_K2.txt};

\addplot[AaltoRed,mark=x,thick,boxplot] 
      table[row sep=newline,y index=0]{../fig/COORDINATED_TURN_BOXPLOT_RMSE_VEL2_K2.txt};

\addplot[AaltoRed,mark=x,thick,boxplot] 
     table[row sep=newline,y index=0]{../fig/COORDINATED_TURN_BOXPLOT_RMSE_VEL3_K2.txt};

\addplot[AaltoRed,mark=x,thick,boxplot] 
      table[row sep=newline,y index=0]{../fig/COORDINATED_TURN_BOXPLOT_RMSE_VEL4_K2.txt};

\addplot[AaltoRed,mark=x,thick,boxplot] 
      table[row sep=newline,y index=0]{../fig/COORDINATED_TURN_BOXPLOT_RMSE_VEL5_K2.txt};

\end{axis}
\end{tikzpicture}}\\
\subfloat[\scriptsize{Turn-rate}]{\begin{tikzpicture}
\begin{axis}[
       boxplot/draw direction=y,
       width=\columnwidth,
       height=0.5\columnwidth,
      xtick={1,2,3,4,5,6,7,8,9,10},
      xticklabels={K1 (0),K1 (1),K1 (2),K1 (3),K1 (4),K2 (0),K2 (1),K2 (2),K2 (3),K2 (4)},
      ticklabel style = {font=\scriptsize},
     xticklabel style = {rotate=60,font=\tiny},
      label style ={font=\scriptsize},
%      ytick pos=left,
%      xtick pos=left,
       %xlabel={RMSE \, $(\mathrm{rad})$},
       %ylabel={\# iterations},
       ymax=0.002,
       ymin=0.00019,
%	   hide x axis,
%       xtick={0.8,0.85,0.9,0.95,1},
%      major tick length=\MajorTickLength,
]
\addplot[AaltoBlue,mark=x,thick,boxplot] 
      table[row sep=newline,y index=0]{../fig/COORDINATED_TURN_BOXPLOT_RMSE_ANG1_K1.txt};

\addplot[AaltoBlue,mark=x,thick,boxplot] 
      table[row sep=newline,y index=0]{../fig/COORDINATED_TURN_BOXPLOT_RMSE_ANG2_K1.txt};

\addplot[AaltoBlue,mark=x,thick,boxplot] 
     table[row sep=newline,y index=0]{../fig/COORDINATED_TURN_BOXPLOT_RMSE_ANG3_K1.txt};

\addplot[AaltoBlue,mark=x,thick,boxplot] 
      table[row sep=newline,y index=0]{../fig/COORDINATED_TURN_BOXPLOT_RMSE_ANG4_K1.txt};

\addplot[AaltoBlue,mark=x,thick,boxplot] 
      table[row sep=newline,y index=0]{../fig/COORDINATED_TURN_BOXPLOT_RMSE_ANG5_K1.txt};

\addplot[AaltoRed,mark=x,thick,boxplot] 
      table[row sep=newline,y index=0]{../fig/COORDINATED_TURN_BOXPLOT_RMSE_ANG1_K2.txt};

\addplot[AaltoRed,mark=x,thick,boxplot] 
      table[row sep=newline,y index=0]{../fig/COORDINATED_TURN_BOXPLOT_RMSE_ANG2_K2.txt};

\addplot[AaltoRed,mark=x,thick,boxplot] 
     table[row sep=newline,y index=0]{../fig/COORDINATED_TURN_BOXPLOT_RMSE_ANG3_K2.txt};

\addplot[AaltoRed,mark=x,thick,boxplot] 
      table[row sep=newline,y index=0]{../fig/COORDINATED_TURN_BOXPLOT_RMSE_ANG4_K2.txt};

\addplot[AaltoRed,mark=x,thick,boxplot] 
      table[row sep=newline,y index=0]{../fig/COORDINATED_TURN_BOXPLOT_RMSE_ANG5_K2.txt};

\end{axis}
\end{tikzpicture}}
\caption{\scriptsize{Boxplots of the RMSE distributions over the 100 Monte Carlo trajectories for position (top), velocity (middle), and turn-rate (bottom), for the iterative Gaussian smoothers K1 and K2. The number of iterations after initialisation is shown in parenthesis.}}\label{fig:coordinated_turn_rmse_boxplot}
\end{figure}

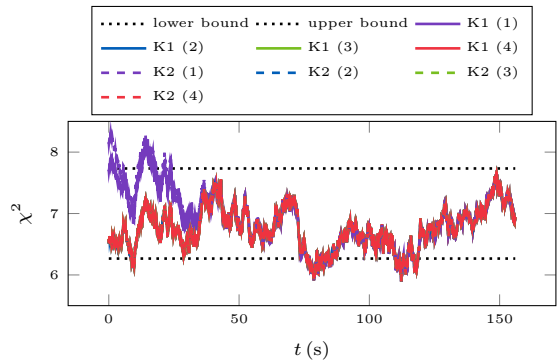
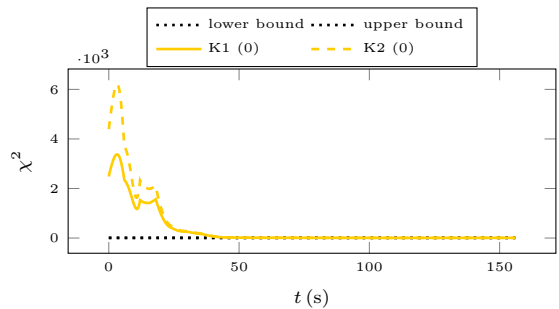
\begin{figure}[t!]
\centering
\subfloat[\scriptsize{$\chi^2$-statistics for iterations 1 through 4 for both kinds.}]{\begin{tikzpicture}
    \begin{axis}[
       width=\columnwidth,
       height=0.5\columnwidth,
	    xlabel	= {$t \, (\mathrm{s})$},
            ylabel = {$\chi^2$},
%            y axis line style= { draw opacity=0 },
%            scaled ticks=false,
%            ticks = none,
%ylabel near ticks,
%ylabel shift = -0.5cm,
%	    yticklabel	= {EXACT},
%	    xticklabel	= {$t$},
      ticklabel style = {font=\tiny},
      label style ={font=\scriptsize},
%	   hide x axis,
% 	   hide y axis,
%	   xmin = -3,
%	   xmax = 50,
	    ymin = 5.5,
	    ymax = 8.5,
	legend cell align = left,
        legend style = {
            font = \tiny, 
	    legend columns = 3,
	   at={(axis cs:78,8.6)},
	    anchor=south
        },
    ]

    % plot lower CI bound
    \addplot[
        mark = none,
        draw = black,
        dotted,
        line width = 1pt,
    ]
    table[
	    x = t,
	    y = lb,
    ]{../fig/COORDINATED_TURN_CHI2STATS.txt};
    \addlegendentry{lower bound}

    % plot upper CI bound
    \addplot[
        mark = none,
        draw = black,
        dotted,
        line width = 1pt,
    ]
    table[
	    x = t,
	    y = ub,
    ]{../fig/COORDINATED_TURN_CHI2STATS.txt};
    \addlegendentry{upper bound}

%    % plot K1 0 iterations
%    \addplot[
%        mark = none,
%        draw = AaltoYellow,
%        solid,
%        line width = 1pt,
%    ]
%    table[
%	    x = t,
%	    y = K11,
%    ]{../fig/COORDINATED_TURN_CHI2STATS.txt};
%    \addlegendentry{K1 (0) }

    % plot K1 1 iterations
    \addplot[
        mark = none,
        draw = AaltoPurple,
        solid,
        line width = 1pt,
    ]
    table[
	    x = t,
	    y = K12,
    ]{../fig/COORDINATED_TURN_CHI2STATS.txt};
    \addlegendentry{K1 (1)}

    % plot K1 2 iterations
    \addplot[
        mark = none,
        draw = AaltoBlue,
        solid,
        line width = 1pt,
    ]
    table[
	    x = t,
	    y = K13,
    ]{../fig/COORDINATED_TURN_CHI2STATS.txt};
    \addlegendentry{K1 (2)}

    % plot K1 3 iterations
    \addplot[
        mark = none,
        draw = AaltoGreen,
        solid,
        line width = 1pt,
    ]
    table[
	    x = t,
	    y = K14,
    ]{../fig/COORDINATED_TURN_CHI2STATS.txt};
    \addlegendentry{K1 (3)}

    % plot K1 4 iterations
    \addplot[
        mark = none,
        draw = AaltoRed,
        solid,
        line width = 1pt,
    ]
    table[
	    x = t,
	    y = K15,
    ]{../fig/COORDINATED_TURN_CHI2STATS.txt};
    \addlegendentry{K1 (4)}

%    % plot K1 0 iterations
%    \addplot[
%        mark = none,
%        draw = AaltoYellow,
%        dashed,
%        line width = 1pt,
%    ]
%    table[
%	    x = t,
%	    y = K21,
%    ]{../fig/COORDINATED_TURN_CHI2STATS.txt};
%    \addlegendentry{K2 (0) }

    % plot K1 1 iterations
    \addplot[
        mark = none,
        draw = AaltoPurple,
        dashed,
        line width = 1pt,
    ]
    table[
	    x = t,
	    y = K22,
    ]{../fig/COORDINATED_TURN_CHI2STATS.txt};
    \addlegendentry{K2 (1)}

    % plot K1 2 iterations
    \addplot[
        mark = none,
        draw = AaltoBlue,
        dashed,
        line width = 1pt,
    ]
    table[
	    x = t,
	    y = K23,
    ]{../fig/COORDINATED_TURN_CHI2STATS.txt};
    \addlegendentry{K2 (2)}

    % plot K1 3 iterations
    \addplot[
        mark = none,
        draw = AaltoGreen,
        dashed,
        line width = 1pt,
    ]
    table[
	    x = t,
	    y = K24,
    ]{../fig/COORDINATED_TURN_CHI2STATS.txt};
    \addlegendentry{K2 (3)}

    % plot K1 4 iterations
    \addplot[
        mark = none,
        draw = AaltoRed,
        dashed,
        line width = 1pt,
    ]
    table[
	    x = t,
	    y = K25,
    ]{../fig/COORDINATED_TURN_CHI2STATS.txt};
    \addlegendentry{K2 (4)}

    \end{axis}
\end{tikzpicture}}\\
\subfloat[\scriptsize{$\chi^2$-statistics for initialisation for both kinds.}]{\begin{tikzpicture}
    \begin{axis}[
       width=\columnwidth,
       height=0.5\columnwidth,
	    xlabel	= {$t \, (\mathrm{s})$},
            ylabel = {$\chi^2$},
%            y axis line style= { draw opacity=0 },
%            scaled ticks=false,
%            ticks = none,
%ylabel near ticks,
%ylabel shift = -0.5cm,
%	    yticklabel	= {EXACT},
%	    xticklabel	= {$t$},
      ticklabel style = {font=\tiny},
      label style ={font=\scriptsize},
%	   hide x axis,
% 	   hide y axis,
%	   xmin = -3,
%	   xmax = 50,
%	    ymin = 5.5,
%	    ymax = 8.5,
scaled y ticks=base 10:-3,
	legend cell align = left,
        legend style = {
            font = \tiny, 
	    legend columns = 2,
	   at={(axis cs:78,6900)},
	    anchor=south
        },
    ]

    % plot lower CI bound
    \addplot[
        mark = none,
        draw = black,
        dotted,
        line width = 1pt,
    ]
    table[
	    x = t,
	    y = lb,
    ]{../fig/COORDINATED_TURN_CHI2STATS.txt};
   \addlegendentry{lower bound}

    % plot upper CI bound
    \addplot[
        mark = none,
        draw = black,
        dotted,
        line width = 1pt,
    ]
    table[
	    x = t,
	    y = ub,
    ]{../fig/COORDINATED_TURN_CHI2STATS.txt};
    \addlegendentry{upper bound}

%    % plot K1 0 iterations
    \addplot[
        mark = none,
        draw = AaltoYellow,
        solid,
        line width = 1pt,
    ]
    table[
	    x = t,
	    y = K11,
    ]{../fig/COORDINATED_TURN_CHI2STATS.txt};
    \addlegendentry{K1 (0) }

    % plot K1 0 iterations
    \addplot[
        mark = none,
        draw = AaltoYellow,
        dashed,
        line width = 1pt,
    ]
    table[
	    x = t,
	    y = K21,
    ]{../fig/COORDINATED_TURN_CHI2STATS.txt};
    \addlegendentry{K2 (0) }

    \end{axis}
\end{tikzpicture}}
\caption{\scriptsize{The $\chi^2$-statistic at each time point, averaged over the Monte Carlo trials with $95\%$ confidence bands for the iterative Gaussian smoothers K1 and K2. The number of iterations after initialisation is shown in parenthesis.}}\label{fig:coordinated_turn_chi2stats}
\end{figure}

\begin{table}[t]
	\centering
	\caption{\scriptsize{The RMSE in position (POS) $(\mathrm{m})$, velocity (VEL) $(\mathrm{m/s})$, turn-rate ($\Psi$) ($10^{-3}\cdot\mathrm{rad/s}$) averaged over Monte Carlo trials, and the average $\chi^2$-statistic, for the iterative Gaussian smoothers K1 and K2. The number of iterations after initialising is shown in parenthesis.}}
	\label{tab:radar_tracking_summary}
	\begin{tabular}{|l |c c c c |}
	\hline Iterations & POS & VEL & $\Psi$ & $\chi^2$      \\
	\hline K1 (0)   & 82.88 & 13.91 & 0.658 & 299.0 \\ 
	\hline K1 (1)   & 16.52 & 1.616 & 0.444 & 6.906 \\ 
	\hline K1 (2)   & 16.44 & 1.611 & 0.445 & 6.750 \\ 
	\hline K1 (3)   & 16.44 & 1.611 & 0.445 & 6.750 \\ 
	\hline K1 (4)   & 16.44 & 1.611 & 0.445 & 6.750 \\ \hline
	\hline K2 (0)   & 86.39 & 14.31 & 0.688 & 458.6 \\ 
	\hline K2 (1)   & 16.53 & 1.618 & 0.445 & 6.943 \\ 
	\hline K2 (2)   & 16.44 & 1.611 & 0.445 & 6.750 \\ 
	\hline K2 (3)   & 16.44 & 1.611 & 0.445 & 6.750 \\ 
	\hline K2 (4)   & 16.44 & 1.611 & 0.445 & 6.750 \\ \hline
	\end{tabular} 
\end{table}

\section{Conclusion}\label{sec:conclusion}
The statistical linear regression method was generalised to obtain linear approximations to non-linear SDEs. This allowed for alternate derivation of the Type II and III smoothers \cite{SarkkaSarmavuori2013} for systems with state independent diffusion. It also lead to the derivation of the novel Type I$^*$ smoother that coincides with the Type I smoother of \cite{SarkkaSarmavuori2013} for state independent diffusions. Furthermore, this linearisation technique was used to develop a continuous-discrete analogue to the iterated Gaussian smoothers \cite{Garcia2017,Tronarp2018a,Bell1994}. The method was found to offer considerable improvements in two challenging and high-dimensional target tracking scenarios, being competitive to the variational smoother \cite{Ala-Luhtala2015}.

\section*{Acknowledgment}
Financial support by the Academy of Finland through grant \#313708 and Aalto ELEC Doctoral School is acknowledged. The authors would also like to thank Juha Ala-Luhtala for sharing his variational smoother codes.

\bibliographystyle{elsarticle-num-names}
%\bibliography{ref}

\end{document}